\documentclass{LMCS}

\def\dOi{10(1:15)2014}
\lmcsheading%
{\dOi}
{1--16}
{}
{}
{Jun.~19, 2012}
{Mar.~\phantom03, 2014}
{}

\keywords{Descriptive complexity, first order projections, NP completeness, problems and reductions, complexity classes}

\ACMCCS{[{\bf Theory of computation}]: Computational complexity and cryptography---Complexity theory and logic}
\amsclass{68Q19}
\usepackage{enumitem,hyperref}

 \usepackage{amsmath}
 \usepackage{amssymb}
 \usepackage{amsthm}
 \usepackage{xspace}
 \usepackage{url}
 \usepackage[numbers]{natbib}

\theoremstyle{plain}


\newenvironment{proof*}{\noindent\textit{Proof.}}{}

\newcommand{\Omit}[1]{}
\newcommand{\tup}[1]{\langle #1 \rangle}

\newcommand{\C}{\mathbf{C}}

\newcommand{\set}[1]{\{#1\}}
\newcommand{\setdef}[2]{\left\{#1 : #2 \right\}}

\newcommand{\fop}{\leq_{\text{fop}}}

\renewcommand{\L}{\mathcal{L}}
\renewcommand{\models}{\vDash}
\newcommand{\nmodels}{\nvDash}
\newcommand{\bit}{\textup{BIT}}

\newcommand{\struc}{\textup{Struc}}
\renewcommand{\mod}{\textup{Mod}}

\newcommand{\FO}{\textup{FO}\xspace}
\newcommand{\FOA}{\ensuremath{\textup{FO}\forall}\xspace}
\newcommand{\SO}{\textup{SO}\xspace}
\newcommand{\SOE}{\ensuremath{\textup{SO}\exists}\xspace}

\newcommand{\A}{\mathcal{A}}
\newcommand{\B}{\mathcal{B}}

\newcommand{\cnf}{\textup{CNF}}
\newcommand{\dnf}{\textup{DNF}}



\newcommand{\AC}{\ensuremath{\textup{AC}^0}\xspace}
\newcommand{\LSPACE}{\textup{L}\xspace}
\newcommand{\NLSPACE}{\textup{NL}\xspace}
\newcommand{\PTIME}{\textup{P}\xspace}
\newcommand{\NP}{\textup{NP}\xspace}
\newcommand{\coNP}{\textup{coNP}\xspace}
\newcommand{\NPC}{\textup{NP-complete}\xspace}

\newcommand{\PSPACE}{\textup{PSPACE}\xspace}
\newcommand{\F}{\mathcal{F}}

\newcommand{\HP}{\textsc{HamiltonianPath}\xspace}

\newcommand{\tdm}{\textsc{3dm}\xspace}
\newcommand{\SAT}{\textsc{SAT}\xspace}
\newcommand{\TwoSAT}{\textsc{2SAT}\xspace}


\begin{document}

\title[Universal FO is superfluous for NL, P, NP and co NP]
  {Universal First-Order Logic is Superfluous for NL, P, NP and co NP}

\author[N.~Borges]{Nerio Borges\rsuper a}	
\address{{\lsuper a}Departamento de Matem\'aticas \\
        Universidad Sim\'on Bol\'{\i}var \\ 
        Caracas, Venezuela \\ }	
\email{nborges@usb.ve}  
\thanks{{\lsuper a}This research was partially funded by a S1N grant of the
		Decanato de Investigaci\'on y Desarrollo 
		de la Universidad Sim\'on Bol\'{\i}var.}

\author[B.~Bonet]{Blai Bonet\rsuper b}	
\address{{\lsuper b}Departamento de Computaci\'on \\
        Universidad Sim\'on Bol\'{\i}var \\ 
        Caracas, Venezuela \\ }	
\email{bonet@ldc.usb.ve}  






\begin{abstract}
  \noindent In this work we continue the syntactic study of completeness that began
with the works of Immerman and Medina.
In particular, we take a conjecture raised by Medina in his dissertation
that says if a conjunction of a second-order and a first-order sentences defines an 
NP-complete problems via fops, then it must be the case that
the second-order conjoint alone also defines a NP-complete problem.
Although this 
claim looks very plausible and intuitive, currently we cannot provide a
definite answer for it.
However, we can solve in the affirmative a weaker claim that says that
all ``consistent'' universal first-order sentences 
can be safely eliminated without the fear of losing completeness.
Our methods are quite general and can be applied to complexity classes
other than NP (in this paper: to NLSPACE, PTIME, and coNP), 
provided the class has a complete problem satisfying a certain combinatorial property. 
\end{abstract}

\maketitle

\section{Introduction}\label{sec:intro}

	Descriptive complexity studies the interplay between complexity theory,
finite model theory and mathematical logic. 
Since its inception in 1974 \cite{fagin:spectra}, descriptive complexity
has been able to characterize all the major complexity classes in terms of
logical languages independent of any computational model, thus suggesting
that the computational complexity of languages is a property intrinsic
to them and not an accidental consequence of our choice for the computational
model.

In descriptive complexity, problems are understood as sets of (finite)
models which are described by logical formulas over given vocabularies,
and reductions between problems correspond to first-order definable
functions between the sets of models that characterize the problems.
Like in structural complexity, reductions play a fundamental role in
descriptive complexity, yet unlike the former where the predominant
type of reduction is the many-one reduction, in descriptive complexity
the predominant type of reduction is the first-order projection (fop).
A fop is a weak type of reduction whose study has provided interesting
results such as that common NP-complete problems like \SAT, \HP and
others remain complete via fop reductions, and that such \NP-complete
problems can be described in a \emph{canonical syntactic form}
\cite{borges-bonet12, Dahlhaus,medina:thesis, medina:syntactic}.

The research that led to this paper was motivated by the following conjecture:

\begin{conj}[Medina \cite{medina:thesis}]
\label{conj:medina-full}
Suppose $\Phi$ is a \SOE sentence and $\psi$ is a \FO sentence.
If $\Phi\land\psi$ defines an \NPC problem, then $\Phi$ defines
an \NPC problem as well.
\end{conj}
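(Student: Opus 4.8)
The plan is to promote the given fop-completeness of $\Phi\land\psi$ to fop-completeness of $\Phi$ by reusing the \emph{same} reduction and correcting it only where it fails. First observe that, since $\psi$ is first order, the second-order existential quantifiers of $\Phi=\exists\bar R\,\varphi$ may be pulled outward, so $\Phi\land\psi\equiv\exists\bar R\,(\varphi\land\psi)$ is again \SOE and its models form a \emph{subset} of $\mod(\Phi)$. Thus $\Phi$ defines a problem that \emph{contains} the \NPC{} problem defined by $\Phi\land\psi$, and the task is to show that the larger problem is still hard. Fix a canonical \NPC{} problem $C$ enjoying the combinatorial property assumed throughout this paper, together with a fop $f\colon C\fop\mod(\Phi\land\psi)$.

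Next I would transport $\psi$ back across $f$. By the substitution property of first-order projections, first-order formulas pull back to first-order formulas, so there is an \FO sentence $\psi'$ over the source vocabulary with $x\models\psi'$ iff $f(x)\models\psi$ for every instance $x$. Reading $f$ now as a candidate reduction into $\mod(\Phi)$, the accounting is as follows: if $x\in C$ then $f(x)\models\Phi\land\psi$, hence $f(x)\models\Phi$; if $x\notin C$ and $x\models\psi'$ then $f(x)\models\psi$ while $f(x)\nmodels\Phi\land\psi$, forcing $f(x)\nmodels\Phi$. The \emph{only} inputs on which $f$ can betray us are the no-instances that also fail $\psi'$. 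Note moreover that $\psi'$ is automatically consistent: every yes-instance satisfies it, so $C\subseteq\mod(\psi')$ with $C\neq\varnothing$.

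The remaining task is therefore to pre-compose $f$ with a membership-preserving normalisation. Concretely, I would seek a fop $g\colon C\fop C$ with $x\in C\iff g(x)\in C$ and, in addition, $g(x)\models\psi'$ for \emph{all} inputs $x$. Granting such a $g$, the composite $f\circ g$ is the desired fop $C\fop\mod(\Phi)$: yes-instances land in $\mod(\Phi\land\psi)$, while no-instances are first coerced into $\mod(\psi')$ without entering $C$, so their images satisfy $\psi$ but not $\Phi\land\psi$ and hence miss $\Phi$. When $\psi$ is universal the pullback $\psi'$ is again universal, and the combinatorial property of $C$ supplies exactly such a $g$, realising the (now consistent) universal constraint through padding while respecting membership; this is the content of the weaker theorem.

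The hard part --- and the reason the full conjecture resists this route --- is that for an arbitrary \FO sentence $\psi$ the pullback $\psi'$ inherits unbounded quantifier alternation, and there is no reason a projection $g$ can install such a global condition on the no-instances while leaving $C$-membership untouched: projections act locally and cannot, in general, certify an alternating first-order property. To make progress I would try to peel $\psi'$ apart by quantifier rank and invoke locality (Gaifman/Hanf normal forms, Ehrenfeucht--Fra\"{\i}ss\'e arguments) to replace the global constraint by finitely many local ones that the padding operation \emph{can} realise; alternatively, one could attempt to absorb the surviving first-order constraint into fresh second-order witnesses, trading $\psi'$ for an enlargement of $\bar R$. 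Both routes must contend with the possibility that $\psi'$ genuinely discriminates among no-instances in a way no fop can homogenise, which is precisely the gap left open here.
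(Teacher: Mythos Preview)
The statement you are attempting is Conjecture~\ref{conj:medina-full}, which the paper explicitly leaves \emph{open}; there is no proof of it in the paper to compare against. The paper establishes only the restricted case $\psi\in\FOA$ (Corollary~\ref{co:superfluity}), and your own write-up acknowledges this: you outline a route, explain why it succeeds for universal $\psi$, and then concede that for arbitrary \FO sentences the key step---building a membership-preserving fop $g$ that forces the pulled-back constraint $\psi'$ on every no-instance---has no known mechanism. So what you have written is not a proof but a discussion of an obstruction, and you correctly locate where the argument stalls.

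Even for the \FOA case your description of the paper's method is slightly off. The paper does \emph{not} pull $\psi$ back to a source-side sentence $\psi'$ and then seek a preprocessing $g:C\fop C$ enforcing $\psi'$. Instead it shows directly that the given reduction $\rho$ already satisfies $\rho(\A)\models\psi$ for \emph{every} input $\A$ (Lemma~\ref{le:corelemma2}): if some $\A\notin S$ had $\rho(\A)\nmodels\psi$, the violated clause yields a short implicant satisfied in $\rho(\A)$; the $(n,k)$-uniformity of the source problem $S$ (via Lemma~\ref{le:corelemma1}) then produces an $\A'\in S$ of the same size with $\rho(\A')$ satisfying the same implicant, hence $\rho(\A')\nmodels\psi$, contradicting that $\rho$ reduces $S$ into $\mod(\Phi\land\psi)$. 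No normalising $g$ is constructed; superfluity of $\psi$ with respect to $\rho$ follows by contradiction from uniformity alone. Your pullback-and-pad variant is a reasonable alternative heuristic (and is close in spirit), but note that pulling a universal sentence back through a fop may introduce quantifiers hidden in the numeric subformulas of the projection, so $\psi'$ need not remain in \FOA on the nose---another reason the paper works on the target side rather than the source side.
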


This is a plausible and intuitive conjecture because it is known that \SOE
captures the class NP while \FO captures (over ordered structures) \AC (i.e., languages recognized
by circuits of polynomial size, constant depth and unbounded fan-in) which
is known to be strictly contained in $\LSPACE \subseteq \PTIME \subseteq \NP$,
and thus the conjunction of $\psi$ with $\Phi$ should not ``add'' hardness
to the problem defined by $\Phi$ because the property defined by $\psi$ is,
by comparison, very easy to check.
In such a case, we say that $\psi$ is superfluous with respect to \NP,
and since $\psi$ is arbitrary, we say that \FO is
superfluous with respect to \NP.

Medina's syntactic study of NP-completeness includes the study of
\emph{syntactic operators} that preserve completeness; i.e., functions
that map sentences into sentences in such a way that a sentence defining
an \NPC problem is mapped into a sentence defining another \NPC problem.
Conjecture~\ref{conj:medina-full} arises in this context since he
observed that superfluous first-order sentences appear in conjunction
with \SOE sentences in the image of \SOE sentences for \NPC problems
for the operator known as \emph{edge creation} \cite{medina:thesis}.

On the other hand, the elimination of such first-order formulas from
conjunctions may prove to be a valuable completeness-preserving operator
in itself, and also provide theoretical justification for the well-known
restriction heuristics that are used to prove the completeness of problems
by enforcing constraints that can be expressed in first-order logic.

In spite of the intuitiveness of this claim, until this date, we cannot
provide a definite answer for it.
However, the main result in this paper implies that the answer
is positive when $\psi$ is a \FOA sentence (i.e., a universal first-order
sentence of the form $\forall\bar x\theta(\bar x)$ where $\theta(\bar x)$
is a quantifier-free formula), and not only for the class \NP but also for
the classes \NLSPACE, \PTIME, \coNP and others.
In general, we show that \FOA is superfluous for all classes that are 
``connected'' in certain way to a combinatorial property that we call
$(n,k)$-uniformity, which up to our knowledge is introduced in this work.

The paper is organized as follows. In the following section, we make the paper
self-contained by revising the necessary definitions from logic, model theory
and descriptive complexity.
Next, we introduce the combinatorial notion of $(n,k)$-uniformity that apply 
to problems, establish some basic properties, and give examples.
Section~\ref{sec:mainresult} proves the main result, which partially solves
the above conjecture in the affirmative by showing that that \FOA
is superfluous for \NP (and other classes).
Section~\ref{sec:applications} shows two applications of our results.
The last section wraps up with a brief summary and
a discussion of future work.

\newpage
\section{Preliminaries}\label{sec:prelim}

\subsection{Logic}

\subsubsection{Syntax}

We consider logical vocabularies without functional symbols  of the form 
$\sigma=\tup{R_1,\ldots,R_r,c_1,\ldots,c_s}$ where each $R_j$ is a relational
symbol of arity $a_j\in\mathbb Z^+$ and each $c_i$ is a constant symbol.
Throughout the paper,
we assume that our logical languages contain the set 
$\set{=,\leq,\bit, suc,0,\max}$
of fixed \emph{numeric} relational and constant symbols, disjoint with
$\sigma$ \cite{immerman:book}.

First-order and second-order formulas over vocabulary $\sigma$ are defined
as usual \cite{Enderton}.  We follow most of the standard notational conventions
found in \cite{Ebbinghaus-Flum, Enderton, immerman:book}. 
The set of all of the first-order (respectively second-order) formulas over
vocabulary $\sigma$ is denoted as $\FO(\sigma)$ (respectively $\SO(\sigma)$).
We write $\FO$ (respectively $\SO$) to denote $\bigcup_{\sigma}\FO(\sigma)$
(respectively $\bigcup_{\sigma}\SO(\sigma)$).
In general if $\L$ is a logic, $\L(\sigma)$ denotes the set of all well formed
formulas in $\L$ over the vocabulary $\sigma$.

An \emph{atomic} formula over vocabulary $\sigma$ has the form $P(t_1,\ldots t_k)$
where $P$ is a $k$-ary relational symbol (numeric or not) and $t_1,\ldots t_k$ are terms
over $\sigma$.
A \emph{literal} is an atomic formula (and then we say it is \emph{positive}) 
or the negation of an atomic formula (and then we say it is \emph{negative}).
A \emph{clause} (respectively an \emph{implicant}) is a disjunction $L_1\lor\cdots\lor L_k$
(respectively conjunction $L_1\land\cdots\land L_k$) of literals.
A {\em numeric formula} in $\L(\sigma)$ is a formula without relational symbols from
$\sigma$, and thus a numeric formula can mention constants from the vocabulary $\sigma$.
Although counterintuitive, this notion guarantees that any non-numeric literal always
refers to a relational symbol in $\sigma$, and also agrees with the notion given in
\cite[Def. 11.7]{immerman:book} where a ``numeric formula'' is one where ``no input
relations occur''.

If $\theta$ is a formula without quantifiers, we say it is in \emph{conjunctive normal form}
(CNF) (respectively \emph{disjunctive normal form} (DNF)) if it is a conjunction (respectively disjunction)
of clauses (respectively implicants).
$k$-CNF (respectively $k$-DNF) is the class of all CNF (respectively DNF) formulas with \emph{at most}
$k$ literals in each clause (implicant). These classes may be ``tagged'' with a vocabulary
$\sigma$ (e.g., $k$-CNF($\sigma$)) when we talk about formulas over the vocabulary $\sigma$.
Likewise, we define $\cnf_k(\sigma)$ as the class of CNF formulas over $\sigma$
whose clauses have at most $k$ non-numeric literals, and similarly for $\dnf_k(\sigma)$.

In general, we use lowercase Greek letters to denote first-order formulas and 
uppercase Greek letter to denote second-order formulas. We write $\psi(x_1,\ldots,x_m)$
to emphasize that the free variables in $\psi$ are among those in
$\tup{x_1,\ldots,x_m}$. A tuple of variables such as $\tup{x_1,\ldots,x_m}$ is
written as $\bar x$ and its length $m$ is denoted by $|\bar x|$. 
A formula with no free variables is referred to as a sentence.
Finally, we use $\Sigma_k^0(\sigma)$ and $\Pi_k^0(\sigma)$ to denote first-order
formulas over vocabulary $\sigma$ with $k$ blocks of alternating quantifiers, beginning
with an existential and universal quantifier respectively. For second-order formulas,
we use the notations $\Sigma_k^1(\sigma)$ and $\Pi_k^1(\sigma)$ respectively.
In all cases, when the vocabulary $\sigma$ is clear from context, we drop it
from the notation.

\subsubsection{Semantics}

Let $\sigma=\tup{R_1,\ldots,R_r,c_1,\ldots,c_s}$ be a vocabulary.
The symbols in $\sigma$ as well as the numeric symbols are interpreted by
$\sigma$-structures.  A (finite) $\sigma$-structure or just a structure,
is a tuple
\[ \A=\tup{|\A|,R_1^\A,\ldots,R_r^\A,c_1^\A,\ldots,c_s^\A} \]
where $|\A|$ is a symbol that denotes the universe (or domain) of $\A$,
each  $R_j^\A\subseteq|\A|^{a_j}$ is a $a_j$-ary relation over $|\A|$ and
each $c_j\in|\A|$ is an element of $|\A|$. 
The number of elements in the universe, size or cardinality of $\A$ is
denoted by $\|\A\|$.
Following \cite{immerman:book}, we assume $|\A|$ to be an initial segment
of size greater than 1 of the set of the natural numbers; i.e., 
$|\A|=[n]=\set{0,1,\ldots, n-1}$ with $n>1$.
We say that $R_j^\A$ and $c_j^\A$ are the \emph{interpretations} of the
relational and constant symbols $R_j$ and $c_j$ in the structure $\A$.
On the other hand, the numeric symbols obtain the \emph{standard interpretations}
in the fragment $[n]$ of the natural numbers \cite{immerman:book}; e.g., the
symbols `$=$' and `$\leq$' are interpreted by the equality and non-strict
natural order in $\mathbb N$ respectively, while the relational symbol
$\bit$ is interpreted by the binary relation $\bit^\A$ given by 
\[ (i,j)\in \bit^\A\iff\text{ the $j$-th bit in the binary expansion of $i$ is 1}\,, \]
where the zeroth position is the least significant bit of $i$.
The set of all the finite $\sigma$-structures is denoted by $\struc(\sigma)$.

We follow the standard definition for the relation $\models$ between structures
and first-order formulas, and its extension to second-order formulas (see, e.g.,
\cite{Ebbinghaus-Flum,Enderton,immerman:book}). These relations are defined in
terms of the relation $\models$ for pairs $\tup{\A,i}$ and formulas $\psi$,
where $\A$ is a structure and $i$ is an interpretation of variables into the
elements of $\A$.
If $\varphi(x_1,\ldots,x_m)$ is a formula over the vocabulary $\sigma$,
$\A\in\struc(\sigma)$, and $\tup{a_1,\ldots,a_m}$ is a tuple over $|\A|$,
then $\A\models\varphi(a_1,\ldots,a_m)$ means that $\tup{\A,i}\models\varphi$
for every interpretation $i$ that maps $x_j$ into $a_j$ for $1\leq j\leq m$.
In particular, when $\varphi$ is a sentence, $\A\models\varphi$
iff $\tup{\A,i}\models\varphi$ for every interpretation $i$.

Given a vocabulary $\sigma$, we say that a $\sigma$-structure $\A$ is a
\emph{model} of sentence $\phi$ if $\A\models\phi$.
The set of all the finite models of $\phi$ is denoted by $\mod(\phi)$;
notice that $\mod(\phi)\subseteq\struc(\sigma)$.
Similarly for second-order formulas.

\subsection{Decision Problems and Complexity Classes}

A \emph{decision problem} (or just \emph{problem}) $S$ is a subset of
$\struc(\sigma)$ for some fixed $\sigma$, which is closed under isomorphisms.
For example, the problem {\sc ThreeDimensionalMatching (3dm)} can be
thought as the set of all structures $\A=\tup{|\A|,M^\A}$ over the
vocabulary $\sigma=\tup{M^3}$, where $M$ is a ternary relational symbol,
such that $M^\A$ contains a 3-dimensional matching; i.e., $M^\A$ contains
a set of triplets
\[ M' = \set{(a_0,b_0,c_0),\ldots, (a_{\|\A\|},b_{\|\A\|},c_{\|\A\|})} \]
such that $a_i\neq a_j$, $b_i\neq b_j$ and $c_i\neq c_j$ for every $i\neq j$.

If $S\subseteq\struc(\sigma)$ is a decision problem, then every finite
$\sigma$-structure $\A$ is an \emph{instance} and every element of $S$
is a \emph{positive instance} of $S$.
We say that a Turing Machine $M$ \cite{Garey&Johnson, Papadimitriou}
\emph{decides} a problem $S$ if, given a suitable encoding of an instance
$\A$ of $S$ as input to $M$, $M$ accepts the input iff $\A$ is a positive
instance of $S$.

Decision problems are classified into \emph{complexity classes}
accordingly to their difficulty.
We assume the standard computational resources (time and space) and
computational modes (deterministic and non-deterministic) found in
the literature \cite{Garey&Johnson,immerman:book,Papadimitriou}.
The most important complexity classes \cite{immerman:book, Papadimitriou}
are \LSPACE, \NLSPACE, \PTIME, \NP and \PSPACE. The following
chain of inclusions is a well-known fact:
\[ \LSPACE \subseteq \NLSPACE \subseteq \PTIME \subseteq \NP \subseteq \PSPACE \,. \]
\LSPACE and \NLSPACE are respectively the classes of all problems solvable by deterministic
and non-deterministic Turing machines that use logarithmic space, \PTIME and \NP
are respectively the classes of problems solvable by deterministic and non-deterministic 
Turing machines in polynomial time, and \PSPACE is the class of problems solvable
by deterministic Turing machines that use polynomial space
\cite{Garey&Johnson,immerman:book,Papadimitriou}.
Given a complexity class $\C$, the class $\text{co}\C$ is the class of all
problems whose complement is in $\C$.
For example, for a problem $S\subseteq\struc(\sigma)$ in the class $\C$,
the problem $\overline{S}=\struc(\sigma)\setminus S$ belongs to $\text{co}\C$.
Turing machines can also compute functions \cite{Papadimitriou}, and similar
complexity measures apply to such functions.

A decision problem is typically characterized by a sentence over some logic $\L$.
\tdm, for example, is characterized by the sentence:
\begin{alignat*}{1}
\Phi_\tdm :=
    \exists R^3 \bigl[\,
        &\forall \bar x (R(\bar x)\longrightarrow M(\bar x)) \,\land \\
        &\forall x (\exists x_2 x_3  R(x,x_2,x_3) \land \exists x_1 x_3 R(x_1,x,x_3) \land \exists x_1 x_2 R(x_1,x_2,x)) \,\land\\
        &\forall \bar x\bar y (R(\bar x)\land R(\bar y)\land \bar x\not=\bar y
		\longrightarrow x_1\neq y_1\land x_2\neq y_2\land x_3\neq y_3)
    \, \bigr] 
\end{alignat*}
in which the quantified ternary relation $R$ denotes the 3-dimensional matching
contained in the input instance whose triplets are given by $M$. Hence, if $\Phi_\tdm$
is satisfied by a structure $\A$, then the set of triplets $M^\A$ that interpret
the relational symbol $M$ contains a 3-dimensional matching.
Thus, the problem \tdm corresponds to the class $\mod(\Phi_\tdm)$ of finite
$\sigma$-structures $\A$ such that $\A\models\Phi_\tdm$.
We also say that $\Phi_\tdm$ \emph{defines} \tdm in \SOE and that \tdm is
\emph{definable} in \SOE.

In general, for a logical language $\L$ and complexity class $\C$, we write
$\C\leq \L$ if every problem in $\C$ is definable in the logic $\L$.
On the contrary, if $\mod(\phi)$ is a problem in $\C$ for every sentence $\phi$
in $\L$, we write $\L\leq\C$.  If both $\L\leq\C$ and $\C\leq\L$ hold, we say
that the logic $\L$ \emph{captures} $\C$ and write $\L=\C$.
The logic $\L$ {\em captures the complexity class $\C$ over 
the class of structures $K$} if for every problem $A$ in $\C$ there is
a sentence $\phi\in \L$ such that $\mod(\phi)= A\cap K$ and
for every sentence $\phi\in L$: $(\mod(\phi)\cap K)\in \C$.

It is known that $\SOE=\NP$ \cite{fagin:spectra}, and $\SOE\text{-Horn}=\PTIME$
and $\SOE\text{-Krom}=\NLSPACE$ over ordered structures \cite{Gradell}.
See the textbook of Immerman \cite{immerman:book} for other characterizations
and results.

\subsection{First-Order Queries and Projections}

\subsubsection{Reductions and Completeness}

The idea of reduction is fundamental in complexity theory.
Roughly speaking, we say that a problem $S\subseteq\struc(\sigma)$ 
\emph{reduces} to a problem $T\subseteq\struc(\tau)$ if there is a
function $f:\struc(\sigma)\rightarrow\struc(\tau)$ such that, for
every $\sigma$-structure $\A$, $f(\A)$ is a positive instance of $T$
if and only if $\A$ is a positive instance of $S$; in such a case,
we say that $f$ is a \emph{reduction} from $S$ to $T$.
Informally, this means that $S$ is as hard to solve as $T$ provided
that $f$ is relatively ``easy'' to compute.
 
Reductions are classified according to their computation complexity.
If there is a function $f$ of type $r$ reducing $S$ to $T$, the we
write $S\leq_r T$ and say that $S$ \emph{reduces to $T$ via $r$-reductions}.
Given a complexity class $\C$ we say that the problem $T$ is \emph{$\C$-complete via $r$-reductions}
if $T$ belongs to $\C$ and $S\leq_r T$ for every problem $S$ in $\C$. 
If the type of the reductions is clear from the context, we just say
that $T$ is $\C$-complete.

\subsubsection{First-Order Projections}

In this paper we are concerned with a very simple type of reduction
that is definable in first-order logic and called first-order projections.
We describe it in the following, but first need to define first-order queries.

Let $\sigma$ and $\tau=\tup{R_1^{a_1},\ldots,R_r^{a_r},c_1,\ldots,c_s}$
be two vocabularies, $k\geq1$ be an integer, and consider the tuple
$I=\tup{\varphi_1,\ldots,\varphi_r,\psi_1,\ldots,\psi_s}$ of $r+s$
first-order formulas in $\FO(\sigma)$ of the form $\varphi_i(x_1,\ldots,x_{ka_i})$
for $1\leq i\leq r$, and $\psi_j(x_1,\ldots,x_k)$ for $1\leq j\leq s$.
That is, $\varphi_i$ has at most $ka_i$ free variables among those in $\set{x_1,\ldots,x_{ka_i}}$,
and $\psi_j$ has at most $k$ free variables among those in $\set{x_1,\ldots,x_k}$.

The tuple $I$ defines a mapping $\A\mapsto I(\A)$, called a \emph{first-order query}
of arity $k$, from $\sigma$-structures into $\tau$-structures.
For given $\A\in\struc(\sigma)$, the map $\A\mapsto I(\A)$ is given by:
\begin{enumerate}
\item Universe $|I(\A)|$ defined as $|\A|^k$ (the $k$-tuples over $|\A|$),
\item Relations $R_i^{I(\A)}\doteq\{(\bar{u}_1,\ldots,\bar{u}_{a_i})\in|\A|^{ka_i}:
                                    \A\models\varphi_i(\bar{u}_1,\ldots,\bar{u}_{a_i})\}$, and
\item Constants $c_j^{I(\A)}\doteq\bar{u}$ for the \emph{unique} $\bar{u}$
  with $\A\models\psi_j(\bar{u})$. If there is no such $\bar u$ or there is
  more than one, the query is not well defined.
\end{enumerate}
Notice that we defined $|I(\A)|$ as $|\A|^k$. In order to keep our convention
of $|I(\A)|$ being an initial segment of $\mathbb N$, we identify it with
the set $\set{0,\ldots,\|\A\|^k-1}$ by lexicographically ordering $|\A|^k$.
The numeric relations and constants in $I(\A)$ are defined in the standard
way such that the numeric symbols obtain the intended interpretations.
It is not difficult to show that the formulas defining the numeric predicates
are all first-order formulas \cite{immerman:book}. 
Some authors consider mappings $I$ extended with a formula $\varphi_0$ used
to define the universe as $|I(\A)|=\{\bar u\in|A|^k:\varphi_0(\bar u)\}$.
This however causes difficulties when defining the interpretation
of the numeric predicates as, in some cases, the formulas defining them
cease to be first-order \cite{immerman:book}. For this reason, we do not
consider such formulas $\varphi_0$.

If $S\subseteq\struc(\sigma)$ and $T\subseteq\struc(\tau)$ are two problems,
and the query $I$ is such that $\A\in S$ iff $I(\A)\in T$, then $I$ is called
a \emph{first-order reduction} from $S$ to $T$.
A first-order query is called a \emph{first-order projection} (fop) if each
$\varphi_i$, and each $\psi_j$, has the form
\[ \alpha_0(\bar x) \lor (\alpha_1(\bar x) \land \lambda_1(\bar x)) \lor
   \cdots \lor (\alpha_e(\bar x) \land \lambda_e(\bar x)) \]
where the $\alpha_k$'s are numeric and pairwise \emph{mutually exclusive}, 
and each $\lambda_k$ is a $\sigma$-literal.
Two formulas $\alpha(\bar x)$ and $\beta(\bar x)$ over vocabulary $\sigma$ are
mutually exclusive if, given any finite $\sigma$ structure $\A$ and tuple
$\bar a\in|\A|^{|\bar x|}$, it holds $\A\models \neg\alpha(\bar a)\lor\neg\beta(\bar a)$.
Projections are typically denoted by the letter $\rho$.
If $S$ is complete for the class $\C$ via $\fop$ reductions, then we say
that $S$ is \emph{$\C$-complete via fops} or simply \emph{$\C$-complete}.

\section{Definitions and Basic Facts}
\label{sec:basics}

This section contains most of the new definitions that we will need
throughout the rest of the paper and some basic properties.
We begin by extending Medina's notion of superfluity \cite{medina:thesis}:

\begin{defi}[Superfluity]
Let $\sigma$ and $\tau$ be two vocabularies, $\L$ be a logic and $\C$ be a complexity
class captured by $\L$.  Then,
\begin{enumerate}
\item A sentence $\psi$ in $\L$ is \emph{superfluous} with respect to fop
  $\rho:\struc(\sigma)\rightarrow\struc(\tau)$ if $\rho(\A)$ satisfies $\psi$
  for every finite $\sigma$-structure $\A$.
\item A sentence $\psi$ in $\L$ is \emph{superfluous} with respect to $\L$ if
  for every sentence $\Phi$ in $\L$:
  \[ \text{$\mod[\Phi\land\psi]$ is $\C$-complete} \implies \text{$\mod[\Phi]$ is $\C$-complete} \,. \]
\item A fragment $\L'\subseteq \L$ is \emph{superfluous} with respect to $\L$ 
  if every sentence $\psi$ in $\L'$ is superfluous with respect to $\L$.		
\item A fragment $\L'\subseteq \L$ is \emph{superfluous} with respect to
  a complexity class $\C$ if $\C$ is captured by $\L$ and $\L'$ is superfluous 
  with respect to $\L$.
\end{enumerate}
\end{defi}
\noindent We will use these definitions to successively break down the problem
of showing that \FOA is superfluous with respect to a complexity class
like \NP into the simpler problem of showing that an implicant $\psi$
is superfluous with respect to a fop-reduction $\rho$.
Indeed, the problem of showing that \FOA is superfluous for \NP is
reduced to showing that \FOA is superfluous for \SOE, that is again
reduced to showing that every formula $\forall\bar x\psi(\bar x)$,
where $\psi(\bar x)$ is in $\cnf_r$ for some $r$, is superfluous
for \SOE, which is ultimately reduced to showing that $\psi(\bar x)$
is superfluous with respect to a fop $\rho$.
With the terminology offered in this definition, the conjecture at
the beginning of this paper can be rephrased as saying that \FO
is superfluous with respect to \NP.

In the following, we develop the notions of $n$-consistency of a formula
and $(n,k)$-uniformity of a subset of structures. These notions will play
a fundamental role in our methods.

\begin{defi}
Let $\sigma$ be a vocabulary, $\varphi(\bar x)$ be a formula in $\FO(\sigma)$,
$n$ be a natural number, and $\bar u\in [n]^{|\bar x|}$ be a tuple of natural
numbers.
We say that $\tup{\varphi(\bar x),\bar u}$ is \emph{$n$-consistent} if
there is a $\sigma$-structure $\A$ with $\|\A\|=n$ such that
$\A\models\varphi(\bar u)$.
If $S$ is a subset of finite $\sigma$-structures, we sat that
$\tup{\varphi(\bar x),\bar u}$ is \emph{$n$-consistent in $S$} if
there is a $\sigma$-structure $\A$ in $S$ with $\|\A\|=n$ such that
$\A\models\varphi(\bar u)$.
When there is no risk of confusion, we abbreviate by
just saying that $\varphi(\bar u)$ is $n$-consistent (in $S$).
\end{defi}

\begin{defi}[Uniformity]
\label{def:uniform}
Let $\sigma=\tup{R_1^{a_1},\ldots,R_s^{a_s},c_1,\ldots,c_t}$ be a vocabulary,
and $S$ be a subset of finite $\sigma$-structures.
Let $n$ and $k$ be two natural numbers.
We say that $S$ is \emph{$(n,k)$-uniform} iff for
\begin{itemize}
\item every integer $m\geq n$ and non-negative integers $p$ and $q$ such that $p+q\leq k$,
\item every sequence $L_1(\bar t_1),\ldots,L_p(\bar t_p)$ of $\sigma$-literals,
\item every sequence $\bar u_1,\ldots,\bar u_p$ of tuples with $\bar u_j\in [m]^{|{\bar t}_j|}$ for $1\leq j\leq p$,
\item every sequence $c_{t_1},\ldots,c_{t_q}$ of constant symbols in $\sigma$, and
\item every sequence $b_1,\ldots, b_q$ of integers with $b_j \in [m]$ for $1\leq j\leq q$, the following holds:
\end{itemize}
If $\varphi(\bar u,\bar b):=\bigwedge_{j=1}^p L_j(\bar u_j) \land \bigwedge_{j=1}^q c_{t_j}=b_j$
is $m$-consistent, then it is also $m$-consistent in $S$.
\end{defi} 

%
%
%
%
%
%

%
%

The following properties follow directly from the definition.

\begin{lem}
\label{le:prop1-uniformity}
\label{le:monotonicity}
If $S\subseteq\struc(\sigma)$ is $(n,k)$-uniform, then it is also $(n,k-1)$-uniform
and $(n+1,k)$-uniform. If $S$ is $(n,k)$-uniform and $S\subseteq T$, then
$T$ is $(n,k)$-uniform.
\end{lem}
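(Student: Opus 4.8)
The plan is to unwind Definition~\ref{def:uniform} and notice that $(n,k)$-uniformity of a class $S$ is simply a universally quantified implication whose range is a family of ``constraint configurations'' $\tup{m,p,q,L_1(\bar t_1),\ldots,L_p(\bar t_p),\bar u_1,\ldots,\bar u_p,c_{t_1},\ldots,c_{t_q},b_1,\ldots,b_q}$ subject to $m\geq n$ and $p+q\leq k$; each of the three assertions then reduces either to the observation that the configuration family of the target property is contained in that of the hypothesis, or to transporting a witnessing structure along the inclusion $S\subseteq T$.

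First I would handle monotonicity in the arity parameter $k$ (assuming $k\geq 1$). Given data to be checked for $(n,k-1)$-uniformity, namely an $m\geq n$, non-negative $p,q$ with $p+q\leq k-1$, literals $L_j(\bar t_j)$, tuples $\bar u_j$, constants $c_{t_j}$ and values $b_j$, and assuming $\varphi(\bar u,\bar b)$ is $m$-consistent, I observe that $p+q\leq k-1\leq k$, so this is one of the configurations already quantified over by the $(n,k)$-uniformity of $S$; hence $\varphi(\bar u,\bar b)$ is $m$-consistent in $S$, as required. Monotonicity in $n$ is the same argument with the other parameter: every configuration needed for $(n+1,k)$-uniformity has $m\geq n+1\geq n$, so it is again covered by the $(n,k)$-uniformity of $S$.

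Next I would do the superset claim. Assume $S$ is $(n,k)$-uniform and $S\subseteq T$, fix an arbitrary configuration as above with $\varphi(\bar u,\bar b)$ $m$-consistent, and apply $(n,k)$-uniformity of $S$ to obtain a structure $\A\in S$ with $\|\A\|=m$ and $\A\models\varphi(\bar u,\bar b)$. Since $S\subseteq T$ we get $\A\in T$, so $\varphi(\bar u,\bar b)$ is $m$-consistent in $T$, which proves that $T$ is $(n,k)$-uniform.

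Honestly I do not expect a genuine obstacle: all three parts are immediate from the definition once it is phrased as a quantified implication. The only thing that warrants care is the bookkeeping --- that the literals and equalities $c_{t_j}=b_j$ appearing in $\varphi(\bar u,\bar b)$ are over the same fixed vocabulary $\sigma$ throughout, and that ``$m$-consistent in $S$'' is by definition monotone in $S$ --- so that shrinking the quantifier range (the first two parts) or enlarging the target class (the third part) never disturbs the configuration itself.
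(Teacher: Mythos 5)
Your proof is correct and matches the paper's intent exactly: the paper omits a proof, stating only that these properties ``follow directly from the definition,'' and your argument is precisely that direct unwinding (shrinking the quantifier range over $p+q\leq k$ and $m\geq n$, and transporting the witness along $S\subseteq T$). No issues.
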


Finally, we provide some examples of $(n(k),k)$-uniform problems, where $k$
ranges over positive integers and $n(k)$ is an increasing function from
$\mathbb Z^+$ to itself:

\begin{lem}\label{le:uniformity-examples}
For every positive integer $k$:
\begin{enumerate}
\item {\sc Reach} is $(2k+1,k)$-uniform,
\item {\sc AltReach} is $(2k+1,k)$-uniform,
\item {\sc HamiltonianPathBetweenZeroAndMax (0m-HP)} is $(4k,k)$-uniform, and
\item {\sc coMonoTriangle} is $(2k+6,k)$-uniform.
\end{enumerate}
\end{lem}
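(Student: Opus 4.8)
The plan is to treat each of the four problems separately, but in every case the strategy is the same: to verify Definition~\ref{def:uniform} directly. Fix $k$, set $n=n(k)$ (so $n(k)\in\{2k+1,2k+1,4k,2k+6\}$ according to the case), take $m\ge n$, and suppose we are given a conjunction $\varphi(\bar u,\bar b):=\bigwedge_{j=1}^p L_j(\bar u_j)\land\bigwedge_{j=1}^q c_{t_j}=b_j$ with $p+q\le k$ that is $m$-consistent, witnessed by some structure $\A$ with $\|\A\|=m$. The goal is to modify $\A$ into a structure $\A'$ with $\|\A'\|=m$ that still satisfies $\varphi(\bar u,\bar b)$ but additionally lies in $S$. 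The key observation making this possible is that the literals $L_1,\dots,L_p$ pin down only at most $k$ ``cells'' of the relations (a positive literal forces a tuple in, a negative literal forces a tuple out), and the constant equations only fix the value of at most $k$ of the interpreted constants; everything else about $\A$ is free to be overwritten. Since $m\ge n(k)$ is large compared to $k$, there is enough room in a universe of size $m$ to build a canonical member of $S$ that happens to agree with $\A$ on the few constrained cells.

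\textbf{Per-problem constructions.} For {\sc Reach} (is there a directed path from $0$ to $\max$?), the vocabulary is a single binary edge relation $E$ plus the two constants $0,\max$; a structure is in $S$ as soon as it contains \emph{some} directed $0$-$\max$ path. Given the at most $k$ constrained edge-cells, I would pick a simple path $v_0=0,v_1,\dots,v_\ell=\max$ of length at most $k$ through vertices chosen to avoid the finitely many forbidden edges — the bound $2k+1$ is exactly what guarantees that $k$ ``bad'' edges plus the need for $k+1$ distinct vertices on the path can always be accommodated (a forbidden edge on a would-be path can always be routed around because there are spare vertices). The constant equations, if present, just fix where $0$ and $\max$ sit, which only shrinks the available pool by a constant and is already absorbed into the counting. {\sc AltReach} (alternating reachability) is handled identically up to bookkeeping of the AND/OR vertex labelling, with the same $2k+1$ bound. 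For {\sc 0m-HP} (Hamiltonian path from $0$ to $\max$) the construction is to lay down a single Hamiltonian path on all $m$ vertices starting at $0$ and ending at $\max$; the freedom is in \emph{which} permutation of the internal vertices we use, and the $4k$ bound gives enough slack to reroute around the $\le k$ forbidden edges while still forcing in the $\le k$ required ones — here one uses that a required edge $(a,b)$ and a forbidden edge can be made compatible in a Hamiltonian path provided the endpoints are not over-constrained, which the size bound ensures. For {\sc coMonoTriangle} (the complement of ``graph contains a monochromatic triangle'' under a $2$-colouring, i.e.\ membership means a triangle-free-by-colour witness exists), one exhibits a $2$-colouring of $K_m$ with no monochromatic triangle that respects the $\le k$ prescribed colour-constraints; the additive constant $+6$ in $2k+6$ is the overhead for locally patching the colouring near the constrained edges while keeping the standard Ramsey-type colouring (based on $\bit$ or on residues) monochromatic-triangle-free everywhere else.

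\textbf{Main obstacle.} The routine part is the counting that $n(k)$ is large enough; the delicate part, and where I would spend the most care, is checking that the literals really are ``local'' — specifically, that a \emph{negative} literal $\neg R(\bar u_j)$ only forbids a single tuple and does not, through the numeric predicates or the constants, propagate a global constraint. Since the vocabulary literals $L_j$ range over the non-numeric symbols only (the definition speaks of $\sigma$-literals, and $\sigma$ excludes the numeric symbols), each $L_j$ indeed constrains exactly one cell of one input relation, so this worry is ultimately dispatched; but it must be stated explicitly. A secondary point requiring attention is consistency of the constant-equations among themselves and with the literals: two equations $c=b$ and $c=b'$ with $b\ne b'$ would make $\varphi$ itself inconsistent, contradicting our hypothesis, so we may assume each constant is fixed at most once, and then the constructed $\A'$ simply honours those fixings. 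Once these locality and consistency remarks are in place, each of the four constructions is a short explicit description plus a one-line verification that it meets all constraints and lies in the respective problem, so the whole lemma reduces to four small combinatorial exercises of the same shape.
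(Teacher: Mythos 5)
Your overall strategy (exploit that the $\le k$ $\sigma$-literals constrain only $\le k$ cells of the input relations, and use the slack in a universe of size $n(k)$ to complete the witness structure to a member of $S$) is the right one and matches the paper for \textsc{Reach} and \textsc{0m-HP}. But two of the four cases have genuine problems. The worst is \textsc{coMonoTriangle}: you have inverted the problem. Membership in \textsc{coMonoTriangle} means that \emph{every} $2$-coloring of the edges contains a monochromatic triangle, not that a triangle-free-by-colour witness exists (that is \textsc{MonoTriangle}). Consequently the construction you propose --- a $2$-colouring of $K_m$ with no monochromatic triangle, patched near the constrained edges --- is aimed at the wrong target and is in any case impossible for $m\ge 6$, since $R(3,3)=6$. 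Note also that the literals $L_j$ constrain only the edge relation $E$; colourings are not part of the instance. The correct construction goes the other way: since $m\ge 2k+6$, there are six vertices not mentioned by $\varphi$, and adding a complete $K_6$ on them touches none of the constrained cells yet forces every $2$-colouring of the resulting graph to contain a monochromatic triangle. (The paper explicitly observes that \textsc{MonoTriangle} itself is \emph{not} $(n,k)$-uniform for $k\ge 15$, precisely by Ramsey's theorem, so the statement your paragraph is implicitly trying to prove is false.)

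The second problem is \textsc{AltReach}: ``handled identically up to bookkeeping'' hides the fact that alternating reachability is not monotone under edge addition. If $s$, or any vertex you route through, is forced by a literal $U(\cdot)$ to be universal and is forced to have other out-edges, then adding a short path $s\to a'\to t$ does not make $t$ accessible from $s$; one must make $t$ accessible from \emph{all} successors of every universal vertex involved. The paper handles this by taking a \emph{minimal} model of the implicant (so the only edges and universal labels are those forced by the positive literals) and then adding, besides $(s,a')$ and $(a',t)$, an edge $(a_i,a')$ from every mentioned vertex $a_i$ to the fresh existential vertex $a'$, after which an induction shows $t$ is accessible from $s$. Your sketch contains no such idea and would fail on, e.g., an implicant forcing $U(s)\land E(s,v_1)\land E(s,v_2)$. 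The remaining two cases are essentially fine: for \textsc{Reach} a single fresh vertex $a'$ with edges $(s,a'),(a',t)$ suffices, and for \textsc{0m-HP} the paper realizes your ``choose the permutation'' step explicitly by interleaving the $\le 2k$ mentioned vertices with fresh ones, so that no edge of the Hamiltonian path has both endpoints mentioned and hence none can be forbidden.
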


Here, {\sc Reach} and {\sc AltReach} refer to graphs $G$ in which there is a path
from a designated vertex $s$ to a designated vertex $t$, the difference among the
two problems being that the first refers to regular graphs while the second to
alternating graphs. 
An \emph{alternating graph} is a directed graph whose vertices are
partitioned into \emph{universal} and \emph{existential}.
The accessibility relation or alternating paths are defined as follows.
Any vertex $u$ is accessible from itself.
If $u$ is an existential vertex, there is an edge $(u,v)$ and $w$ is a vertex 
accessible from $v$, then $w$ is accessible from $u$. Finally, if $u$ is an
universal vertex, there is at least one edge leaving $u$ and $w$ is accessible
from every vertex $v$ such that $(u,v)$ is an edge, then $w$ is accessible
from $u$. 

{\sc 0m-HP} refers to graphs in which there is a Hamiltonian
path between the vertex denoted by the constant 0 and the vertex denoted by 
the constant $\max$. {\sc coMonoTriangle} is the complement of the problem
{\sc MonochromaticTriangle}; i.e., {\sc coMonoTriangle} consists of all graphs
$G$ for which every 2-coloring of the edges of $G$ contains a monochromatic triangle.
Formal definitions for these problems are provided in the Appendix.

%
%
%
%
%
%

\begin{proof}
\textbf{Case 1.}
{\sc Reach} is defined with the vocabulary $\sigma=\tup{E,s,t}$ where $E$ a
binary relation and $s,t$ are two constant symbols.
Let $\varphi$ be an implicant formula like the one in Definition~\ref{def:uniform}
that has at most $k$ literals.
Such a formula postulates the presence/absence of at most $k$ edges 
in the graph encoded by a given $\sigma$-structure. The idea of the 
proof is to construct a path from $s$ to $t$ by adding two new
edges $(s,a')$ and $(a',t)$ for a vertex $a'$ that is not referred to
by the formula $\varphi$, which exists since the graph is assumed
to have $2k+1$ vertices and $\varphi$ can refer to at most $2k$
different vertices.
Formally, notice that that $\varphi$ has at most $2k$ free-variables
that we denote by $\bar x$ and write $\varphi(\bar x)$.
Let $k\geq 0$ and $m\geq 2k+1$ be two integers and consider a
sequence $\bar a$ of $2k$ integers in $[m]$ such that
$\varphi(\bar a)$ is $m$-consistent. 
Then, there is a model $\A=\tup{[m],E^\A,s^\A,t^\A}$ of size $m$ that
satisfies $\varphi(\bar a)$.
Let $a'\in[m]$ be an integer not contained in the sequence $\bar a$,
which exists since $m\geq 2k+1$. Then, the structure $\A'=\tup{[m],E^{\A'},s^\A,t^\A}$,
where $E^{\A'}=E^\A \cup \{(s^\A,a'),(a',t^\A)\}$, satisfies $\varphi(\bar a)$ and
belongs to {\sc Reach}.
Therefore, {\sc Reach} is $(2k+1,k)$-uniform for every $k\geq 0$.
                                  
\medskip\noindent
\textbf{Case 2.}
{\sc AltReach} is defined with the vocabulary $\sigma=\tup{E,U,s,t}$
where $E$ denotes the edges of the graph and $U$ is a unary relation
denoting the universal vertices.
As before, let $\varphi$ be an implicant with at most $k$ literals.
Such a formula refers to at most $2k$ vertices and thus can be
written as $\varphi(\bar x)$ where $|\bar x|=2k$.
Let $k\geq 0$ and $m\geq 2k+1$ be two integers and consider a
sequence $\bar a$ of $2k$ integers in $[m]$ such that
$\varphi(\bar a)$ is $m$-consistent. 
Let $\A=\tup{[m],E^\A,U^\A,s^\A,t^\A}$ be a \emph{minimum}
$\sigma$-structure of size $[m]$ that satisfies $\varphi(\bar a)$,
where the minimum is with respect to the sizes of $E^\A$ and $U^\A$
(i.e., $\A$ satisfies $\varphi(\bar a)$ but if some edge from
$E^\A$ or some vertex from $U^\A$ is removed, then $\A$
ceases to satisfy $\varphi(\bar a)$).
It is not hard to see that in the graph encoded by $\A$, there
are at most $2k$ vertices connected to $s$ through simple (non-alternating)
paths.
Let $a'$ be an integer in $[m]$ not contained in $\bar a$.
We form a new structure $\A'$ that is like $\A$ but with the
additional edges
\[ \{ (s^\A,a'), (a',t^\A) \} \cup \{ (a_i,a') : 1\leq i\leq 2k \} \,. \]
It is not hard to see that $\A'$ satisfies $\varphi(\bar a)$ and has an
alternating path connecting $s$ to $t$. The latter because every vertex
$a_i$ connected to $s$ through a simple path in $\A$ is connected to
$t$ through an alternating path, which then implies that every such
vertex $a_i$ and $s$ itself is connected to $t$ through an alternating
path.
Therefore, {\sc AltReach} is $(2k+1,k)$-uniform.

\medskip\noindent
\textbf{Case 3.}
{\sc 0m-HP} is defined with the vocabulary $\sigma=\tup{E}$ consisting of
a single binary relation symbol $E$.
As before, the formula $\varphi$ refers to at most $2k$ vertices
and can be written as $\varphi(\bar x)$ with $|\bar x|=2k$.
Let $k\geq 0$ and $m\geq 4k$ be two integers and consider a 
sequence $\bar a$ of $2k$ integers in $[m]$ such that $\varphi(\bar a)$
is $m$-consistent.
Let $\A=\tup{[m],E^\A}$ be a $\sigma$-structure satisfying $\varphi(\bar a)$.
We need to construct a structure $\A'$ of the same size that 
also satisfies $\varphi(\bar a)$ and has a Hamiltonian path
between $0$ and $\max=m-1$.

Let $V=\{a_1,a_2,\ldots,a_{2k}\}\setminus\{0,m-1\}$ be the set of 
\emph{unique} elements in the sequence $\bar a$ except $0$ and $m-1$.
The set $V$ has $\ell\leq 2k$ elements and thus we can pick another
$\ell$ elements $V'=\{c_1,\ldots,c_{\ell}\}$ such that $V\cap V'=\emptyset$.
We construct $\A'$ like $\A$ but add edges so to construct the path
\[ 0c_1a'_1c_2a'_2 \ldots c_\ell a'_\ell \]
that begins at 0 and does not repeat any vertex. This path contains
$2\ell+1\leq m$ vertices. If there are vertices $w_1,\ldots,w_q$ in $[m]$
that do not lie on the path, the path can be ``completed'' into
a Hamiltonian path joining $0$ and $m-1$ of the form
\[ 0c_1a'_1c_2a'_2 \ldots c_\ell a'_\ell w_1 w_2 \ldots w_q (m-1) \,. \]
This path contains all the elements in $[m]$ and does not violate
any of the conditions imposed by $\varphi(\bar a)$.
Therefore, there is a positive instance of {\sc 0m-HP} that
satisfies $\varphi(\bar a)$, and {\sc 0m-HP} is thus $(4k,k)$-uniform
for every $k\geq 0$.

\medskip\noindent
\textbf{Case 4.}
{\sc coMonoTriangle} is also defined with the vocabulary $\sigma=\tup{E}$.
Consider a formula $\varphi(\bar x)$ as the above for {\sc 0m-HP}.
Let $k\geq 0$ and $m\geq2k+6$ be two integers and consider a sequence
$a_1,\ldots,a_{2k}$ of integers in $[m]$ such that $\varphi(\bar a)$
is $m$-consistent. Let $\A=\tup{[m],E^\A}$ be a $\sigma$-structure
satisfying $\varphi(\bar a)$.
Since $m\geq 2k+6$, there are 6 vertices $\{u_0,\ldots,u_5\}$ that
are not mentioned in $\bar a$. Then, the structure $\A'=\tup{[m],E^{\A'}}$
where $E^{\A'}=E^\A\cup\{(u_i,u_j):0\leq i,j\leq 5\}$ is a structure
that satisfies $\varphi(\bar a)$.
However, $\A'$ encodes a graph $G$ that has a complete subgraph of order
6 and therefore every 2-coloring of the edges of $G$ contains a monochromatic
triangle\footnote{This is a well-known result in Ramsey theory, but follows
  easily using the pigeonhole principle \cite{ramsey}. Consider a red/blue
  coloring of the edges of a $K_6$ and pick any vertex $v$. There are 5
  edges incident at $v$, so by the pigeonhole principle there are at least
  three edges of the same color: say blue and incident at the vertices
  $\{x,y,z\}$.  If any edge of the edges connecting $\{x,y,z\}$ is blue,
  then there is a blue triangle. Otherwise, all such edges are red and
  there is a red triangle.
}
and $\A'$ belongs to {\sc coMonoTriangle}.
Hence {\sc coMonoTriangle} is $(2k+6,k)$-uniform for every $k\geq0$.
\end{proof}

Notice that $(n,k)$-uniformity is not preserved under complementation.
For example, it is easy to see that {\sc MonoTriangle} is not $(n,k)$-uniform
for any pair of natural numbers $(n,k)$ with $k\geq 15$ since, by a
result of Ramsey \cite{ramsey}, no graph in {\sc MonoTriangle} satisfies
$\bigwedge\setdef{E(i,j)}{0\leq i<j\leq 5}$.

\section{Main Result}
\label{sec:mainresult}

In this section we state and prove the main result of this paper
which, among other things, implies the superfluity of \FOA with 
respect to \NP. Basically, the result says that if a complexity
class $\C$ contains a family of complete problems that are
$(n,k)$-uniform for increasing values of $k$, plus other 
conditions, then \FOA is superfluous with respect to $\C$.
The proof is a direct consequence of a series of results
that are presented in the form of one proposition and two
lemmas.

The proposition is a standard result in logic whose proof is left
to the reader. It is used in the proof of Lemma~\ref{le:corelemma1}
to show that a numeric formula $\gamma(\bar x)$ holds in a
structure $\A'$ when it holds in a structure $\A$ of the same size
and with the same interpretation for the constant symbols.
Remember that a numeric formula is one with no relation symbols
from the vocabulary.

\begin{prop}
\label{prop:constantslemma}
Let $\sigma$ be a vocabulary with constants $c_1,\ldots, c_t$ and
$\varphi(\bar x)\in\FO(\sigma)$ be a numeric formula.
If $\A$ and $\A'$ are two finite $\sigma$-structures of the same
size and $c_j^\A=c_j^{\A'}$ for every $1\leq j\leq t$, then for every
interpretation of variables $i$:
$\tup{\A,i}\models\varphi(\bar x)$ iff $\tup{\A',i}\models\varphi(\bar x)$.
\end{prop}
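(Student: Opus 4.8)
The plan is to prove the statement by a routine structural induction on $\varphi$, after first recording two observations that make all numeric symbols and all terms behave identically in $\A$ and $\A'$.

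First I would note that, by our conventions, $|\A| = |\A'| = [n]$ for the common size $n$, and that every numeric symbol ($=$, $\leq$, $\bit$, $\suc$, $0$, $\max$) receives the \emph{standard interpretation} on $[n]$, which depends only on $n$. Hence each numeric relation and each numeric constant has exactly the same interpretation in $\A$ as in $\A'$. Combined with the hypothesis $c_j^\A = c_j^{\A'}$ for $1\leq j\leq t$, this shows that every $\sigma$-term --- which, since $\sigma$ has no function symbols, is just a variable or a constant symbol --- denotes the same element of $[n]$ in $\tup{\A,i}$ as in $\tup{\A',i}$, for every variable interpretation $i$.

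Next I would run the induction on the structure of $\varphi$. The crucial point is that, by the definition of \emph{numeric formula} given in the preliminaries, $\varphi$ contains no relation symbol from $\sigma$; so every atomic subformula has the form $P(t_1,\ldots,t_k)$ with $P$ numeric. By the previous paragraph the $t_i$ evaluate identically in the two structures and $P^\A = P^{\A'}$, whence $\tup{\A,i}\models P(\bar t)$ iff $\tup{\A',i}\models P(\bar t)$; this is the base case. The Boolean cases ($\neg$, $\land$, $\lor$, $\to$) follow immediately from the induction hypothesis. For the quantifier cases $\exists y\,\psi$ and $\forall y\,\psi$, observe that $\psi$ is again a numeric formula, that both structures quantify over the same domain $[n]$, and that for each $a\in[n]$ the extended interpretation $i[y\mapsto a]$ is literally the same on both sides; applying the induction hypothesis to $\tup{\A,i[y\mapsto a]}$ and $\tup{\A',i[y\mapsto a]}$ yields the required equivalence.

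I do not expect any genuine obstacle here, which is why the proof is deferred to the reader in the paper. The only things to be slightly careful about are the bookkeeping in the quantifier step --- one needs the induction hypothesis for the \emph{modified} interpretation $i[y\mapsto a]$, which is available precisely because the statement is proved uniformly ``for every interpretation $i$'' --- and the use of the definition of numeric formula to rule out any occurrence of a $\sigma$-relation, so that the only possible differences between $\A$ and $\A'$ (namely, the interpretations of the symbols $R_1,\ldots,R_r$) are invisible to $\varphi$.
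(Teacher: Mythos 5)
Your proof is correct and is exactly the standard structural induction the paper has in mind when it declares the proposition ``a standard result in logic whose proof is left to the reader'': the key observations (numeric symbols get the standard interpretation determined solely by the common universe $[n]$, the $\sigma$-constants agree by hypothesis, terms are only variables or constants since $\sigma$ is function-free, and a numeric formula contains no $\sigma$-relation symbols) together with the uniform-in-$i$ induction hypothesis for the quantifier step are precisely what is needed. Nothing is missing.
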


The following two lemmas are novel and related to the concepts of 
consistency and uniformity presented above.
The first establishes the consistency of the image (of a fop $\rho$)
for a uniform subset $S$ of structures with respect to formulas
$\psi(\bar x)$ that are \emph{implicants}.
This result is then used in the second lemma to establish that \FOA
formulas $\psi=\forall\bar x\theta(\bar x)$ are superfluous with
respect to reductions $\rho$ that map a uniform subset $S$ into
$\mod(\Phi\land\psi)$ for any sentence $\Phi$.

\begin{lem}\label{le:corelemma1}
Let $\sigma$ and $\tau=\tup{R_1,\ldots,R_s,c_1,\ldots,c_t}$ be two
vocabularies where each $R_j$ is an $a_j$-ary relation symbol and
$\sigma$ has constants $\{c'_1,\ldots,c'_{t'}\}$, and let $n\geq 0$
and $k\geq t+t'$ be two non-negative integers.
Further, let 
\begin{description}
  \item[H1] $\rho:\struc(\sigma)\rightarrow\struc(\tau)$ be a $d$-ary fop given by the
	tuple $\tup{\varphi_1,\ldots,\varphi_s,\psi_1,\ldots,\psi_t}$ consisting of
	projective formulas,
  \item[H2] $S\subseteq\struc(\sigma)$ be an $(n,k)$-uniform subset of $\sigma$-structures,
  \item[H3] $\psi(\bar x)$ be a $\dnf_{k-t-t'}(\tau)$ implicant,
  \item[H4] $\A$ be a finite $\sigma$-structure with $\|\A\|\geq n$, and
  \item[H5] $\bar u$ be a tuple of elements from $|\rho(\A)|$ such that $\rho(\A)\models\psi(\bar u)$.
\end{description}
Then, there is a structure $\A'$ in $S$ with $\|\A\|=\|\A'\|$ such that $\rho(\A')\models\psi(\bar u)$.
\end{lem}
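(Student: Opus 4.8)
The plan is to unwind the definitions and reduce the statement to a single application of $(n,k)$-uniformity. First I would expand $\psi(\bar u)$: since $\psi(\bar x)$ is a $\dnf_{k-t-t'}(\tau)$ implicant, it is literally a conjunction $\bigwedge_{\ell=1}^{r} M_\ell(\bar y_\ell)$ of $\tau$-literals with $r \le k-t-t'$ (here I am reading H3's ``implicant'' as saying $\psi$ is a single implicant, i.e.\ a conjunction of literals, bounded by $k-t-t'$ non-numeric literals). So $\rho(\A)\models\psi(\bar u)$ means $\rho(\A)\models M_\ell(\bar u_\ell)$ for each $\ell$, where $\bar u_\ell$ is the appropriate subtuple of $\bar u$. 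Next I would translate each $M_\ell(\bar u_\ell)$, which is a statement about the $\tau$-structure $\rho(\A)$, into a statement about the $\sigma$-structure $\A$, using the fact that $\rho$ is a fop given by projective formulas $\tup{\varphi_1,\ldots,\varphi_s,\psi_1,\ldots,\psi_t}$.

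The key step is this translation. A positive literal $R_j(\bar v)$ holding in $\rho(\A)$ means $\A\models\varphi_j(\bar v')$ where $\bar v'$ is the tuple of elements of $|\A|$ coding $\bar v\in|\rho(\A)|^{a_j}=(|\A|^d)^{a_j}$; since $\varphi_j$ is projective, $\varphi_j(\bar v')$ has the form $\alpha_0\lor\bigvee_i(\alpha_i\land\lambda_i)$ with the $\alpha$'s numeric and mutually exclusive and each $\lambda_i$ a single $\sigma$-literal. Because $\rho(\A)\models R_j(\bar v)$, exactly one disjunct is satisfied by $\A$ on the tuple $\bar v'$; if it is $\alpha_0$ the literal is forced by the numeric part alone, and otherwise it is some $\alpha_i\land\lambda_i$, contributing the single $\sigma$-literal $\lambda_i(\bar v')$. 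Negative literals and the constant-defining formulas $\psi_1,\ldots,\psi_t$ (which pin down $c_1^{\rho(\A)},\ldots,c_t^{\rho(\A)}$, and whose numeric-forced case yields equalities $c'_i = b$ among $\sigma$-constants) are handled the same way. Collecting these, I obtain a conjunction of at most $r \le k-t-t'$ $\sigma$-literals together with at most $t$ constraints of the form $c_{t_j}=b_j$ coming from the constants appearing inside the $M_\ell$'s, and at most $t'$ further equalities $c'_i=b_i$ — a total of at most $k$ conjuncts — which is satisfied by $\A$, hence is $\|\A\|$-consistent. Here Proposition~\ref{prop:constantslemma} is invoked to argue that the purely numeric residue (the $\alpha_0$-forced conjuncts and any numeric side conditions) is a property that transfers between same-size structures with the same constant interpretations, so it need not be tracked as part of the combinatorial constraint.

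Now apply H2: $S$ is $(n,k)$-uniform and $\|\A\|\ge n$, so this conjunction of $\le k$ literals-and-constant-constraints, being $\|\A\|$-consistent, is $\|\A\|$-consistent \emph{in $S$}. Let $\A'\in S$ with $\|\A'\|=\|\A\|$ witness this. Choosing $\A'$ so that its constants also agree with those of $\A$ (the constant-constraints $c'_i=b_i$ extracted above ensure this for the relevant ones, and the uniformity witness can be taken with these values), Proposition~\ref{prop:constantslemma} guarantees every numeric subformula behaves identically in $\A$ and $\A'$; combined with $\A'$ satisfying all the extracted $\sigma$-literals, one reads the translation backwards to conclude $\rho(\A')\models M_\ell(\bar u_\ell)$ for each $\ell$, i.e.\ $\rho(\A')\models\psi(\bar u)$, as required.

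The main obstacle I anticipate is bookkeeping rather than conceptual: carefully accounting for \emph{which} and \emph{how many} $\sigma$-literals and constant-equalities get produced, so that the count stays within $k$ (this is exactly why the hypothesis reads $k\ge t+t'$ and $\psi\in\dnf_{k-t-t'}$), and cleanly separating the ``numeric-forced'' contributions — which are invisible to the combinatorial uniformity condition and must instead be dispatched by Proposition~\ref{prop:constantslemma} — from the genuine relational constraints that uniformity must absorb. Keeping the coding map between $|\rho(\A)|$-tuples and $|\A|$-tuples explicit, and ensuring the constants of the uniformity witness $\A'$ are pinned to the right values, are the delicate points.
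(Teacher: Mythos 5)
Your proposal is correct and follows essentially the same route as the paper's proof: translate each non-numeric $\tau$-literal of $\psi$ and each $\tau$-constant definition through the projective formulas into at most one non-numeric $\sigma$-literal plus a numeric residue, add the $t'$ equalities pinning the $\sigma$-constants, observe the total stays within $k$, invoke $(n,k)$-uniformity, and use Proposition~\ref{prop:constantslemma} to transfer the numeric parts before reading the translation backwards. The only (harmless) slip is labelling the $t$ constraints obtained from the $\tau$-constant-defining formulas as equalities $c_{t_j}=b_j$; they are in fact non-numeric $\sigma$-literals, but since Definition~\ref{def:uniform} budgets literals and constant equalities jointly ($p+q\le k$), the count and the argument go through unchanged.
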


\begin{proof*}
The strategy for the proof is as follows. First, we give a claim that 
relates statements about $\rho(\A)$ in terms of statements about $\A$,
and vice versa.
Second, we use the claim to construct a formula $\delta(\bar x)$ and
tuple $\bar a$ over $|\A|$ such that $\A\models\delta(\bar a)$ iff
$\rho(\A)\models\psi(\bar u)$ plus conditions that guarantee that the 
constants are interpreted in the same way in the structures $\A$ and $\A'$,
and in $\rho(\A)$ and $\rho(\A')$ respectively.
Third, we use the uniformity of $S$ to obtain a structure $\A'\in S$
that satisfies $\delta(\bar a)$.  Finally, we use the claim again, but
applied to the structure $\A'$, to show $\rho(\A')\models\psi(\bar u)$.

For the claim, recall that the mapping $\B\mapsto\rho(\B)$ is defined by
the projective formulas that define the interpretations in $\rho(\B)$ of the
relation and constant symbols in $\tau$; i.e., $\rho(\B)\models R_j(\bar a)$
iff $\B\models\varphi_j(\bar a)$ where $\bar a$ is a tuple in $|\B|^{da_j}$
(because $\rho$ has arity $d$ and $R_j$ has arity $a_j$), and
$\rho(\B)\models c_j=\bar w$ iff $\B\models\psi_j(\bar w)$ where $\bar w$ is
a tuple in $|\B|^d$. So, whether we are dealing with formulas interpreted
in $\B$ or $\rho(\B)$, the tuples $\bar a$ are always over elements in $|\B|$.

\medskip\noindent
\begin{clm}\footnote{This claim is a special case of a more general
  result expressed in terms of the dual operator associated with a
  first-order query \cite{immerman:book}.}

Let $\B$ be a $\sigma$-structure, $\eta(\bar y)$ be either a non-numeric
$\tau$-literal or an atomic formula of the form $c=\bar y$, and $\bar a$
be a tuple of $|\bar y|$ elements in the universe $|\B|$.
Then, there is a formula $\mu(\bar y)$ over $\sigma$ that is either numeric
or a conjunction $\alpha(\bar y)\land\lambda(\bar y)$, with $\alpha$ being
numeric and $\lambda$ being a non-numeric $\sigma$-literal, such that
$\rho(\B)\models\eta(\bar a)$ iff $\B\models\mu(\bar a)$.
\end{clm}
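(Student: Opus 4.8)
The plan is to read $\mu$ off directly from the projective shape of the formulas defining $\rho$. Recall that each $\varphi_j$ and each $\psi_j$ in the tuple defining the fop $\rho$ has the form
$\alpha_0(\bar x)\lor(\alpha_1(\bar x)\land\lambda_1(\bar x))\lor\cdots\lor(\alpha_e(\bar x)\land\lambda_e(\bar x))$, where the $\alpha_l$ are numeric and pairwise mutually exclusive and each $\lambda_l$ is a non-numeric $\sigma$-literal. Fix $\B$, $\eta(\bar y)$, and $\bar a$ as in the statement. If $\eta$ is a positive or negative literal built from the relation symbol $R_j$, put $\chi:=\varphi_j$; if $\eta$ has the form $c_j=\bar y$, put $\chi:=\psi_j$. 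By the definition of the fop image, $\rho(\B)\models R_j(\bar a)$ iff $\B\models\varphi_j(\bar a)$, and $c_j^{\rho(\B)}=\bar a$ iff $\B\models\psi_j(\bar a)$ (the latter because $c_j^{\rho(\B)}$ is by definition the unique tuple satisfying $\psi_j$ in $\B$); hence in all cases $\rho(\B)\models\eta(\bar a)$ is equivalent, up to the possible outer negation in $\eta$, to $\B\models\chi(\bar a)$. Since the numeric parts $\alpha_0,\ldots,\alpha_e$ of $\chi$ are pairwise mutually exclusive, at most one of $\alpha_0(\bar a),\ldots,\alpha_e(\bar a)$ holds in $\B$; which one (if any) holds is determined by $\B$ and $\bar a$ — in fact, by Proposition~\ref{prop:constantslemma}, only by $\|\B\|$ and the interpretations of the constants — and this is all we need, since $\mu$ is allowed to depend on $\B$ and $\bar a$.

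With this setup I would split into three cases. \emph{(i)} If $\alpha_0(\bar a)$ holds in $\B$, then $\chi(\bar a)$ is true in $\B$, so $\rho(\B)\models\eta(\bar a)$ holds exactly when $\eta$ is positive or of the form $c_j=\bar y$; accordingly one takes $\mu$ to be an identically true numeric sentence (say $0=0$) in that subcase and an identically false one (say $\neg(0=0)$) when $\eta$ is negative. \emph{(ii)} If $\alpha_l(\bar a)$ holds in $\B$ for a necessarily unique $l\in\{1,\ldots,e\}$, then in $\B$ the disjunction $\chi(\bar a)$ collapses to its $l$-th disjunct and is equivalent to $\lambda_l(\bar a)$; hence $\rho(\B)\models\eta(\bar a)$ is equivalent to $\B\models\lambda_l(\bar a)$ when $\eta$ is positive or $c_j=\bar y$, and to $\B\models\overline{\lambda_l}(\bar a)$ when $\eta$ is negative, where $\overline{\lambda_l}$ denotes the complementary $\sigma$-literal. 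Accordingly I set $\mu:=\alpha_l\land\lambda_l$ in the former subcases and $\mu:=\alpha_l\land\overline{\lambda_l}$ in the latter; since $\alpha_l(\bar a)$ is true in $\B$, each of these has the required equivalence and is of the permitted shape $\alpha\land\lambda$ with $\alpha$ numeric and $\lambda$ a non-numeric $\sigma$-literal. \emph{(iii)} If none of $\alpha_0(\bar a),\ldots,\alpha_e(\bar a)$ holds in $\B$, then $\chi(\bar a)$ is false in $\B$, and one again takes $\mu$ to be an identically false numeric sentence when $\eta$ is positive or $c_j=\bar y$ (here using, in the $c_j=\bar y$ case, that $\psi_j(\bar a)$ false forces $\bar a\neq c_j^{\rho(\B)}$) and an identically true one when $\eta$ is negative.

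Each case closes with a one-line check that $\rho(\B)\models\eta(\bar a)$ iff $\B\models\mu(\bar a)$, so I do not expect a genuine obstacle; the only points requiring a little care are, in the negative-literal case, to note that $\overline{\lambda_l}$ is still a non-numeric $\sigma$-literal and to interchange the roles of the identically-true and identically-false numeric sentences, and to observe that an atom $c_j=\bar y$ behaves exactly like a positive literal precisely because well-definedness of $\rho(\B)$ turns $c_j^{\rho(\B)}=\bar a$ into $\B\models\psi_j(\bar a)$.
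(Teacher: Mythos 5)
Your proof is correct and follows essentially the same route as the paper's: unfold the projective formula defining the symbol occurring in $\eta$, use the mutual exclusivity of the numeric guards $\alpha_0,\ldots,\alpha_e$ to isolate at most one live disjunct, and read $\mu$ off from it; you merely organize the case analysis by which guard holds at $\bar a$ rather than by the sign of $\eta$ (and, in passing, you also cover the direction $\rho(\B)\nmodels\eta(\bar a)$, which the paper's proof does not bother with). The one substantive difference is in your degenerate cases (i) and (iii), where you take $\mu$ to be a trivially true or false numeric sentence, whereas the paper keeps $\alpha_0$ (respectively the numeric implicant $\bigwedge_{l}\neg\alpha_l$) as $\mu$. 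Both choices verify the claim as literally stated, but the paper's choice is the one that makes the claim directly reusable in the proof of Lemma~\ref{le:corelemma1}: there one must transfer the \emph{same} $\mu$ to the new structure $\A'$ and conclude $\rho(\A')\models\eta(\bar u)$, which with $\mu=\alpha_0$ is immediate from Proposition~\ref{prop:constantslemma}, while with $\mu$ equal to $0=0$ one must additionally argue that $\A'$ lands in the same case of your analysis as $\A$ did (true, again because $\A$ and $\A'$ share their size and constant interpretations, but that argument is then no longer packaged inside the claim).
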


\bigskip\noindent
We now use the claim to prove the lemma. The proof of the claim appears at
the end of this proof.

By H3, the implicant has form $\psi(\bar x)=\theta(\bar x) \land L_1(\bar x)\land \cdots \land L_m(\bar x)$
where $\theta(\bar x)$ is a numeric implicant and $L_1,\ldots,L_m$ are non-numeric
$\tau$-literals with $m\leq k-t-t'$.
By H5, we know that
\begin{displaymath}
\rho(\A)\models\theta(\bar u),\quad \rho(\A)\models L_1(\bar u),\quad \rho(\A)\models L_2(\bar u),\quad \cdots\quad \rho(\A)\models L_m(\bar u)  \,.
\end{displaymath}

Let us apply the Claim to these entailments and the tuple $\bar u$ to obtain
the formulas
\begin{displaymath}
\mu_1(\bar y),\,\mu_2(\bar y),\,\ldots,\, \mu_m(\bar y)
\end{displaymath}
such that $\A\models\mu_j(\bar u)$ for $1\leq j\leq m$.
Thus, $\A\models \mu_1(\bar u)\land \cdots \land \mu_{m}(\bar u)$.
By collecting the numeric subformulas in the $\mu_j$'s into a single numeric formula
$\beta(\bar y)$, we obtain
\begin{displaymath}
\A\models \beta(\bar u)\land \lambda_1(\bar u)\land\cdots\land\lambda_\ell(\bar u)
\end{displaymath}
where each $\lambda_j$ is a non-numeric $\sigma$-literal for $1\leq j\leq \ell$
with $\ell\leq m$. At this stage, we can apply the uniformity of $S$ to obtain
a structure $\A'\in S$ satisfying $\lambda_j(\bar u)$ for $1\leq j\leq \ell$; yet
this is not enough as we also require $\A'\models\beta(\bar u)$ and
$\rho(\A')\models\theta(\bar u)$.
However, Proposition~\ref{prop:constantslemma} can help us provided that $\A$ and $\A'$
interpret the constants in the same way, and the same for $\rho(\A)$ and $\rho(\A')$.

So, consider the tuples $\bar v_1,\ldots,\bar v_t$ that interpret the constant symbols
$c_1,\ldots,c_t$ in $\rho(\A)$.
Applying the claim again to the formulas $c_j = \bar v_j$ in the image $\rho(\A)$
and the tuple $\bar v=\tup{\bar v_1,\ldots,\bar v_t}$, we obtain a numeric
formula $\tilde{\beta}(\bar v)$ and non-numeric $\sigma$-literals
$\tilde{\lambda}_1(\bar v_1),\ldots,\tilde{\lambda}_{\ell'}(\bar v_t)$
with $\ell'\leq t$ such that 
\begin{displaymath}
\A\models \tilde{\beta}(\bar v)\land \tilde{\lambda}_1(\bar v)\land\cdots\land\tilde{\lambda}_{\ell'}(\bar v) \,.
\end{displaymath}
Further, if $w_1,\ldots,w_{t'}$ are the interpretation of the constants
$c'_1,\ldots,c'_{t'}$ in $\A$, then
\begin{displaymath}
\A\models \lambda_1(\bar u)\land\cdots\land\lambda_\ell(\bar u) \land
          \tilde{\lambda}_1(\bar v)\land\cdots\land\tilde{\lambda}_{\ell'}(\bar v) \land
          c'_1 = w_1 \land \cdots \land c'_{t'} = w_{t'}
\end{displaymath}
with $\ell + \ell' + t' \leq k$.
Thus, apply the $(n,k)$-uniformity of $S$ to obtain a structure $\A'\in S$ with 
$\|\A'\|=\|\A\|$ and such that
\begin{displaymath}
\A'\models \lambda_1(\bar u)\land\cdots\land\lambda_\ell(\bar u) \land
           \tilde{\lambda}_1(\bar v)\land\cdots\land\tilde{\lambda}_{\ell'}(\bar v) \land
           c'_1 = w_1 \land \cdots \land c'_{t'} = w_{t'} \,.
\end{displaymath}

In particular, $\A$ and $\A'$ have the same interpretation for the constants
$\{c'_1,\ldots,c'_{t'}\}$ and thus, by Proposition~\ref{prop:constantslemma},
$\A'\models\beta(\bar u)\land\tilde{\beta}(\bar v)$.
On the other hand, a new application of the claim but using the structure $\A'$,
gives us $\rho(\A')\models L_1(\bar u)\land\cdots\land L_m(\bar u)$ and
$\rho(\A')\models c_1=\bar v_1\land\cdots\land c_t=\bar v_t$. The latter
implies that $\rho(\A)$ and $\rho(\A')$ have the same interpretation for
the constants $\{c_1,\ldots,c_t\}$.
By Proposition~\ref{prop:constantslemma}, $\rho(\A')\models\theta(\bar u)$
as well. Therefore, $\rho(\A')\models\psi(\bar u)$ as needed.

\begin{proof}[Proof of the Claim:]
Let us consider the two cases whether $\eta(\bar y)$ is a positive or negative literal.
In the first case, $\eta(\bar y)$ is either $R(\bar y)$ or $c=\bar y$ for some
$R\in\{R_1,\ldots,R_s\}$ or $c\in\{c_1,\ldots,c_t\}$.
Suppose that $\rho(\B)\models\eta(\bar a)$. Then, there is a projective formula of the form
\begin{equation}\label{eq:projective}
\varphi(\bar y) =
  \alpha_0(\bar y) \lor (\alpha_1(\bar y) \land \lambda_1(\bar y)) \lor \cdots \lor 
                        (\alpha_p(\bar y)\land\lambda_p(\bar y))
\end{equation}
that defines the interpretation of $\eta(\bar y)$ in $\rho(\B)$ such that
$\B\models\varphi(\bar a)$. Then, either $\B\models\alpha_0(\bar a)$ which
is numeric or $\B\models\alpha_j(\bar a)\land\lambda_j(\bar a)$ for exactly
one $1\leq j\leq p$ since the formulas $\alpha_j$'s are mutually exclusive.
Conversely, if $\B\models\alpha_0(\bar a)$ or
$\B\models\alpha_j(\bar a)\land\lambda_j(\bar a)$ for some $1\leq j\leq p$,
then $\rho(\B)\models\eta(\bar a)$.

The second case is when $\eta(\bar y)$ is a negative $\tau$-literal
of the form $\eta(\bar y)=\neg R(\bar y)$. Let $\varphi(\bar y)$ be
the projective formula like \eqref{eq:projective} that defines the
interpretation of $R$ in the image of $\rho$.
Assume that $\rho(\B)\models\eta(\bar a)$. Then, $\B\nmodels\varphi(\bar a)$.
There are two possibilities.
First, $\B\nmodels\alpha_0(\bar a)\lor\cdots\lor\alpha_p(\bar a)$
in which case $\B\models\gamma(\bar a)$ for the numeric implicant
$\gamma(\bar y)=\neg\alpha_0(\bar y)\land\cdots\land\neg\alpha_p(\bar y)$.
Second, $\B\models\alpha_j(\bar a)\land\neg\lambda_j(\bar a)$ for
exactly one $1\leq j\leq p$.
In either case, the claim is satisfied.
Finally, for the converse direction, if $\B\models\gamma(\bar a)$
or $\B\models\alpha_j(\bar a)\land\neg\lambda_j(\bar a)$, then 
$\rho(\B)\nmodels R(\bar a)$ and hence $\rho(\B)\models\eta(\bar a)$.
\end{proof}
\end{proof*}

\begin{lem}\label{le:corelemma2}
Let $\sigma$ and $\tau=\tup{R_1,\ldots,R_s,c_1,\ldots,c_t}$ be two vocabularies
where $\sigma$ has constants $\{c'_1,\ldots,c'_{t'}\}$, and let $n\geq 0$
be a non-negative integers.
Further, let 
\begin{description}
  \item[H1] $S\subseteq\struc(\sigma)$ be a subset of structures that contains all the
    structures $\A$ with $\|\A\|<n$,
  \item[H2] $\psi=\forall\bar x\theta(\bar x)$ be a \FO-sentence with
    $\theta(\bar x)\in\cnf_r(\tau)$ for some integer $r$, and
  \item[H3] $\Phi$ be a sentence in $\L(\tau)$. 
\end{description}
If $S$ is $(n,k)$-uniform for $k\geq t+t'+r$ and $\rho$ is a fop that \emph{reduces}
$S$ to $\mod(\Phi\land\psi)$, then $\psi$ is \emph{superfluous} with respect to $\rho$.
\end{lem}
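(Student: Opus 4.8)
The plan is to show that for every finite $\sigma$-structure $\A$, the image $\rho(\A)$ satisfies $\psi=\forall\bar x\theta(\bar x)$. Fix such an $\A$ and a tuple $\bar u$ over $|\rho(\A)|$; I need $\rho(\A)\models\theta(\bar u)$. Since $\theta(\bar x)\in\cnf_r(\tau)$, it is a conjunction of clauses each with at most $r$ non-numeric $\tau$-literals, so it suffices to handle a single clause $C(\bar x)$ of this form and show $\rho(\A)\models C(\bar u)$. Arguing by contradiction, suppose $\rho(\A)\models\neg C(\bar u)$. Negating a clause yields an implicant: $\neg C(\bar x)$ is (equivalent to) the conjunction of the negations of the literals of $C$, and since each non-numeric literal's negation is again a non-numeric $\tau$-literal, $\neg C(\bar x)$ is a $\dnf_r(\tau)$ implicant (a single implicant, which is trivially a 1-clause DNF). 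The point of the numeric part needs a little care, but any numeric subformula of $\neg C$ can be absorbed into the numeric component allowed by H3 of Lemma~\ref{le:corelemma1}.

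Next I would invoke Lemma~\ref{le:corelemma1} with the data at hand: the fop $\rho$ (H1), the $(n,k)$-uniform set $S$ (H2, using $k\geq t+t'+r$ so that the implicant $\neg C$ lies in $\dnf_{k-t-t'}(\tau)$ as required by H3 of that lemma), the structure $\A$, and the observation $\rho(\A)\models\neg C(\bar u)$ (H5). There is one subtlety: Lemma~\ref{le:corelemma1} needs $\|\A\|\geq n$. If $\|\A\|<n$, then by H1 of the present lemma $\A$ itself already belongs to $S$, hence $\rho(\A)\in\mod(\Phi\land\psi)$ directly, so $\rho(\A)\models\psi$ and we are done for that case; thus we may assume $\|\A\|\geq n$ when applying Lemma~\ref{le:corelemma1}. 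The lemma then produces a structure $\A'\in S$ with $\|\A'\|=\|\A\|$ and $\rho(\A')\models\neg C(\bar u)$.

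Now the contradiction: $\A'\in S$ and $\rho$ reduces $S$ to $\mod(\Phi\land\psi)$, so $\rho(\A')\models\Phi\land\psi$, and in particular $\rho(\A')\models\psi=\forall\bar x\theta(\bar x)$, which forces $\rho(\A')\models\theta(\bar u)$ and hence $\rho(\A')\models C(\bar u)$, contradicting $\rho(\A')\models\neg C(\bar u)$. Therefore no such $\bar u$ exists, $\rho(\A)\models\theta(\bar u)$ for all $\bar u$, so $\rho(\A)\models\psi$. Since $\A$ was an arbitrary $\sigma$-structure, $\psi$ is superfluous with respect to $\rho$ by definition.

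The main obstacle I anticipate is purely bookkeeping around the numeric parts and the literal-count accounting: making precise that the negation of a $\cnf_r$ clause is a legitimate $\dnf_{k-t-t'}(\tau)$ implicant in the exact sense demanded by hypothesis H3 of Lemma~\ref{le:corelemma1} (in particular checking that the ``at most $r$ non-numeric literals'' bound survives negation and that the numeric residue is permitted), together with the degenerate-case split on whether $\|\A\|<n$ or $\|\A\|\ge n$. The logical skeleton — negate a clause, push it through Lemma~\ref{le:corelemma1} to land inside $S$, then use the reduction property to get a contradiction — is straightforward; everything nontrivial has already been packaged into Lemmas~\ref{le:corelemma1} and the uniformity machinery.
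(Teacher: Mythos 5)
Your proposal is correct and follows essentially the same route as the paper: negate a violated clause of $\theta$ to obtain a $\dnf_{k-t-t'}(\tau)$ implicant, push it through Lemma~\ref{le:corelemma1} to land on some $\A'\in S$ whose image also violates $\psi$, and contradict the fact that $\rho$ reduces $S$ into $\mod(\Phi\land\psi)$. The only cosmetic difference is that you split on $\|\A\|<n$ versus $\|\A\|\geq n$ while the paper splits on $\A\in S$ versus $\A\notin S$; both dispositions of the degenerate case are fine.
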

\begin{proof}
Assume that $\theta(\bar x)= \bigwedge_{1\leq i\leq m}\theta_i(\bar x)$
where each $\theta_i$ is a clause of the form
\[ \theta_i(\bar x) \equiv \beta_i(\bar x)\lor L_1^i(\bar x)\lor\cdots\lor L_{m_i}^i(\bar x) \,, \]
where $\beta_i$ is a disjunction of numeric literals
and $L_1^i(\bar x),\ldots,L_{m_i}^i(\bar x)$ are non-numeric literals with $m_i\leq r$.
We want to show that $\rho(\A)\models\forall\bar x\theta(\bar x)$ for
every $\A\in\struc(\sigma)$.
Let us consider two cases:
\begin{itemize}
\item If $\A\in S$, then $\rho(\A)\models\psi$ since $\rho$ reduces $S$ to $\mod(\Phi\land\psi)$.
\item If $\A\not\in S$, then $\|\A\|\geq n$ and $\rho(\A)\not\in\mod(\Phi\land\psi)$.
  Thus, either $\rho(\A)\nmodels\Phi$ or $\rho(\A)\nmodels\psi$.
  Assume, for the sake of a contradiction, that $\rho(\A)\not\models\psi$.
  So, there is a clause $\theta_i(\bar x)$ and a tuple $\bar u\in |\rho(\A)|^{|\bar x|}$
  such that $\rho(\A)\not\models\theta_i(\bar u)$; i.e.,
  \begin{alignat*}{1}
    \rho(\A) & \nmodels \beta_i(\bar u)\lor L_1^i(\bar u)\lor\cdots\lor L_{m_i}^i(\bar u)
  \intertext{or equivalently}
    \rho(\A) & \models \gamma(\bar u)\land \lambda_1(\bar u)\land\cdots\land\lambda_{m_i}(\bar u)
  \end{alignat*}
  where $\gamma(\bar x)\equiv\neg\beta_i(\bar x)$ is a conjunction of numeric literals,
  and each $\lambda_j(\bar x)\equiv\neg L_j^i(\bar x)$ is a non-numeric literal.
  At this moment, all the hypotheses of Lemma~\ref{le:corelemma1} are satisfied and
  thus there is a finite $\sigma$-structure $\A'\in S$ with $\|\A'\|=\|\A\|$ such that
  $\rho(\A')\models\gamma(\bar u) \land \lambda_1(\bar u) \land\cdots\land \lambda_{m_1}(\bar u)$.
  Hence, $\rho(\A')$ does not satisfy $\psi$.
  This is a contradiction since $\A'\in S$ and $\rho$ reduces $S$ to $\mod(\Phi\land\psi)$.
  Therefore, $\rho(\A)\models\psi$. \qedhere
\end{itemize}
\end{proof}

\begin{cor}\label{cor:corelemma2}
With the same hypotheses of Lemma~\ref{le:corelemma2}, the fop $\rho$ reduces $S$
to $\mod(\Phi)$. Hence, if $\L$ captures $\C$ and $S$ is $\C$-complete, then
$\mod(\Phi)$ is $\C$-complete.
\end{cor}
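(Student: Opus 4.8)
The plan is to read off the corollary almost directly from Lemma~\ref{le:corelemma2}, the only extra ingredient being the closure of first-order projections under composition.

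First I would unpack what Lemma~\ref{le:corelemma2} gives us under its hypotheses: $\psi$ is superfluous with respect to $\rho$, which by definition means $\rho(\A)\models\psi$ for \emph{every} finite $\sigma$-structure $\A$. With this in hand, showing that $\rho$ reduces $S$ to $\mod(\Phi)$ is a short set-theoretic argument using that $\rho$ already reduces $S$ to $\mod(\Phi\land\psi)$. For an arbitrary $\A\in\struc(\sigma)$: if $\A\in S$ then $\rho(\A)\models\Phi\land\psi$, hence in particular $\rho(\A)\models\Phi$; conversely, if $\rho(\A)\models\Phi$, then since $\rho(\A)\models\psi$ holds unconditionally, we get $\rho(\A)\models\Phi\land\psi$, and therefore $\A\in S$ because $\rho$ reduces $S$ to $\mod(\Phi\land\psi)$. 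Thus $\A\in S \iff \rho(\A)\in\mod(\Phi)$, i.e., $\rho$ is a fop-reduction from $S$ to $\mod(\Phi)$.

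For the ``hence'' part I would establish $\C$-completeness of $\mod(\Phi)$ in the two usual steps. Membership $\mod(\Phi)\in\C$ is immediate from $\Phi\in\L(\tau)$ and the hypothesis that $\L$ captures $\C$ (specifically the direction $\L\leq\C$). For $\C$-hardness, let $P$ be any problem in $\C$; since $S$ is $\C$-complete via fops we have $P\fop S$, and we have just shown $S\fop\mod(\Phi)$, so composing the two projections yields $P\fop\mod(\Phi)$. Hence $\mod(\Phi)$ is $\C$-complete via fops.

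I do not expect a genuine obstacle, as the corollary is essentially a repackaging of Lemma~\ref{le:corelemma2}; the only points needing a little care are (i) correctly using ``$\psi$ superfluous with respect to $\rho$'' as the statement ``$\rho(\A)\models\psi$ for all $\A$'' and feeding it through \emph{both} directions of the reduction equivalence, and (ii) invoking the standard fact that a composition of two first-order projections is again a first-order projection, which is what transports the $\C$-hardness of $S$ to $\mod(\Phi)$.
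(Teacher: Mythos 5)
Your proposal is correct and follows essentially the same route as the paper, whose proof is just the one-line observation that $\rho(\A)\models\psi$ for every $\A\in\struc(\sigma)$; you have simply spelled out the biconditional and the standard completeness transfer (membership via $\L\leq\C$, hardness via composition of fops) that the paper leaves implicit.
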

\begin{proof}
Direct since for every structure $\A\in\struc(\sigma)$, $\rho(\A)\models\psi$.
\end{proof}

We now define what is a complete and uniform family of problems for a class $\C$,
and state and prove the main theorem of the paper.

\begin{defi}
A family $\F$ of problems over vocabulary $\sigma=\tup{R_1,\ldots, R_s,c_1,\ldots,c_t}$ 
is \emph{complete and uniform} for a complexity class $\C$ if
1)~every problem in $\F$ is $\C$-complete, and 
2)~there is a sequence $\set{n_k}_{k\geq0}$ and a natural number $m$ such that
for every $k\geq m$ there is a $(n_k,k)$-uniform problem $S_{n_k}$ in $\F$ that
contains all the structures $\A\in\struc(\sigma)$ with $\|\A\|<n_k$.
\end{defi}

\begin{thm}[Main]\label{theo:main}
Let $\C$ be a complexity class captured by $\L$ with $\FO\subseteq\L$.
If $\C$ contains a complete and uniform family $\F$,
then \FOA is superfluous with respect to $\C$.
\end{thm}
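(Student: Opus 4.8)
The plan is to unwind the definition of superfluity until the claim reduces to a direct invocation of Lemma~\ref{le:corelemma2} and Corollary~\ref{cor:corelemma2}. By parts~(2)--(4) of the definition of superfluity — legitimate since $\FOA\subseteq\FO\subseteq\L$ — it suffices to fix an arbitrary \FOA sentence $\psi$ and an arbitrary sentence $\Phi\in\L$ over a common vocabulary $\tau=\tup{R_1,\ldots,R_s,c_1,\ldots,c_t}$ with $\mod[\Phi\land\psi]$ being $\C$-complete, and then to establish that $\mod[\Phi]$ is $\C$-complete. First I would normalize $\psi$: writing $\psi=\forall\bar x\,\theta'(\bar x)$ with $\theta'$ quantifier-free, I convert $\theta'$ into an equivalent conjunctive normal form $\theta$; since $\theta'$ contains only finitely many non-numeric atomic subformulas, this $\theta$ can be taken in $\cnf_r(\tau)$ for some $r$ depending only on $\theta'$. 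Replacing $\psi$ by $\forall\bar x\,\theta(\bar x)$ alters neither $\mod[\Phi\land\psi]$, $\mod[\Phi]$, nor the goal, so henceforth $\psi=\forall\bar x\,\theta(\bar x)$ with $\theta\in\cnf_r(\tau)$.

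Next I would pull the right witness out of $\F$. Let $\sigma$ be the vocabulary of $\F$, say with $t'$ constant symbols, and let $m$ and $\set{n_k}_{k\geq0}$ be as in the definition of a complete and uniform family. Put $k=\max(m,\,t+t'+r)$. Since $k\geq m$, completeness and uniformity of $\F$ furnish a problem $S:=S_{n_k}\in\F$ that is $\C$-complete, is $(n_k,k)$-uniform, and contains every $\sigma$-structure $\A$ with $\|\A\|<n_k$. As $S$ is $\C$-complete it lies in $\C$; and as $\mod[\Phi\land\psi]$ is $\C$-complete via fops, there is a fop $\rho:\struc(\sigma)\to\struc(\tau)$ reducing $S$ to $\mod[\Phi\land\psi]$.

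At this point all the hypotheses of Lemma~\ref{le:corelemma2} are met with this $S$, $n=n_k$, $\Phi$, $\psi$, $\rho$, and $k\geq t+t'+r$: H1 holds because $S$ contains every $\sigma$-structure of size $<n_k$, H2 by the normalization of $\psi$, H3 because $\Phi\in\L(\tau)$, while $S$ is $(n_k,k)$-uniform and $\rho$ reduces $S$ to $\mod[\Phi\land\psi]$. The lemma then yields that $\psi$ is superfluous with respect to $\rho$, i.e.\ $\rho(\A)\models\psi$ for every finite $\sigma$-structure $\A$. Corollary~\ref{cor:corelemma2} applies verbatim: $\rho$ reduces $S$ to $\mod[\Phi]$, and since $\L$ captures $\C$ and $S$ is $\C$-complete, $\mod[\Phi]$ is $\C$-complete. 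As $\psi$ and $\Phi$ were arbitrary, \FOA is superfluous with respect to $\L$, hence — $\C$ being captured by $\L$ — with respect to $\C$.

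Apart from the two core lemmas the proof is bookkeeping, so there is no deep obstacle; the steps deserving care are (i) the CNF normalization, which fixes $r$ and thereby how large the uniformity index $k$ must be chosen, and (ii) picking from $\F$ a single problem $S$ that simultaneously is $(n_k,k)$-uniform for $k\geq\max(m,t+t'+r)$, contains all structures of size below $n_k$ (needed for H1 of Lemma~\ref{le:corelemma2}), and is itself $\C$-complete, so that $\C$-completeness travels from $S$ to $\mod[\Phi]$ along the fop $\rho$. The hypothesis $\FO\subseteq\L$ is used only to ensure that $\psi$, and hence $\Phi\land\psi$, is a sentence of $\L$.
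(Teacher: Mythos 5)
Your proof is correct and follows essentially the same route as the paper's: normalize $\psi$ to $\forall\bar x\,\theta(\bar x)$ with $\theta\in\cnf_r(\tau)$, extract from $\F$ a $\C$-complete, $(n_k,k)$-uniform problem containing all small structures, obtain the fop $\rho$ into $\mod(\Phi\land\psi)$, and conclude via Lemma~\ref{le:corelemma2} and Corollary~\ref{cor:corelemma2}. The only (cosmetic) difference is that you take $k=\max(m,\,t+t'+r)$ up front, whereas the paper sets $k=r+t+t'$ and disposes of the case $k<m$ separately via the monotonicity of uniformity.
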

\begin{proof}
Let $\tau$ be a vocabulary and $\psi$ be a \FOA-sentence on $\tau$. 
It is enough to prove that $\psi$ is superfluous with respect to $\L$
since $\psi$ is an arbitrary sentence.
First notice that $\psi$ can be written in prenex normal form with a
quantifier-free part in $\cnf_r$ for some $r\in\mathbb N$.
Let $\Phi\in\L(\tau)$ be a sentence such that $\mod(\Phi\land\psi)$ is
$\C$-complete.
Let $t$ be the number of constant symbols in the vocabulary $\sigma$ for
the family $\F$, $k=r+t+t'$ where $t'$ is the number of constant symbols
in $\tau$, and $\{n_k\}_{k\geq 0}$ and $m$ be the sequence and natural
number for $\F$.
If $k\geq m$, there is a problem $S_{n_k}\in\F$ that is $(n_k,k)$-uniform,
$\C$-complete, and contains all the structures $\A\in\struc(\sigma)$
with $\|\A\|<n_k$. Then, there is a fop $\rho$ that reduces $S_{n_k}$
to $T$.
Thus, all the hypotheses in Lemma~\ref{le:corelemma2} are fulfilled and
therefore $\mod(\Phi)$ is $\C$-complete by Corollary~\ref{cor:corelemma2}.
The case $k<m$ is covered by the case $k=m$ since $(n_k,k)$-uniformity 
implies $(n_k,k-1)$-uniformity according to Lemma \ref{le:prop1-uniformity}.
\end{proof}

\subsection{Superfluity of \FOA for Some Complexity Classes} 

We have seen that {\sc 0m-HP} is $(4k,k)$-uniform for every $k\in\mathbb N$,
but it easy to see that it has negative instances of every size and thus
the Theorem~\ref{theo:main} cannot be applied directly.
Hence, we make the following definitions:

\begin{defi}
If $S$ is a problem over $\sigma$, we define for each $n\in\mathbb N$:
\[ S_n := S \cup \setdef{\A\in\struc(\sigma)}{\|\A\|<n} \]
and the family of problems
\[ \F(S) := \set{S_n}_{n\geq 2} \,. \]
\end{defi}

We will also use another property.
The notion of autoreducibility is well known \cite{buhrman:structure}.
We can translate it in the context of fops by saying that
a problem is autoreducible if there is a reduction from it to itself 
different than the identity.
We need autoreducible sets with an extra requirement on the
cardinalities of the image structures:

\begin{defi}
Given a vocabulary $\sigma$ and a natural number $n$, a set
$S\subseteq\struc(\sigma)$ is \emph{$n$-autoreducible} if
there is a fop $\rho:\struc(\sigma)\rightarrow\struc(\sigma)$
which reduces $S$ to itself and such that $\|\rho(\A)\|>n$
for every $\A\in\struc(\sigma)$.
\end{defi}

It is immediate to see that the problem $S_n$ is $\C$-hard if
$S$ is $\C$-hard and $n$-autoreducible.

\begin{thm}\label{theo:supinnp}
$\F(\textsc{Reach})$, $\F(\textsc{AltReach})$,
  $\F\text(\textsc{0m-HP})$ and $\F(\textsc{CoMonoTriangle})$ are
  complete and uniform families for \NLSPACE, \PTIME, \NP and \coNP
  respectively.
\end{thm}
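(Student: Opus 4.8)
The plan is to verify, for each of the four families $\F(S)$, the two clauses of the definition of ``complete and uniform,'' taking the completeness of the underlying base problem $S$ as known input.

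For the uniformity clause I would argue as follows. Let $S$ be one of \textsc{Reach}, \textsc{AltReach}, \textsc{0m-HP}, \textsc{CoMonoTriangle}, and put $n_k$ equal to $2k{+}1$, $2k{+}1$, $4k$, $2k{+}6$ respectively, so that $S$ is $(n_k,k)$-uniform by Lemma~\ref{le:uniformity-examples}. Since $S\subseteq S_{n_k}=S\cup\setdef{\A}{\|\A\|<n_k}$, Lemma~\ref{le:prop1-uniformity} gives that $S_{n_k}$ is $(n_k,k)$-uniform as well, and $S_{n_k}$ contains every structure of size $<n_k$ by definition; as $n_k\geq 2$ whenever $k\geq 1$, the sequence $\{n_k\}$ together with $m=1$ witnesses clause~(2) for $\F(S)$.

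For the completeness clause I would treat membership and hardness of an arbitrary member $S_n$ ($n\geq 2$) of $\F(S)$ separately. Membership is routine: $S_n$ agrees with $S$ outside the finite set of structures of size $<n$, and each of \NLSPACE, \PTIME, \NP, \coNP is closed under finite modifications, so $S_n$ lies in the relevant class because $S$ does. For hardness I would invoke the observation preceding the theorem: $S_n$ is $\C$-hard once $S$ is $\C$-hard --- which is known (\textsc{Reach} is \NLSPACE-complete and \textsc{AltReach} is \PTIME-complete via fops by Immerman's theorems, while \textsc{0m-HP} and \textsc{CoMonoTriangle} are \NP- resp.\ \coNP-complete via fops; see the Appendix) --- and $n$-autoreducible. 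So the substantive step is to produce, for every $n$, a fop reducing $S$ to itself whose output always has more than $n$ elements. Fix $d$ with $2^d>n$; any $d$-ary fop has an output universe of size $\|\A\|^d\geq 2^d>n$, so it suffices to give a $d$-ary fop of $S$ into itself. For \textsc{Reach}, \textsc{AltReach} and \textsc{CoMonoTriangle} I would embed $\A$ onto the first $\|\A\|$ elements of the new universe $[\|\A\|^d]$ (through the lexicographic identification) and make the remaining elements inert: isolated vertices for \textsc{Reach} and \textsc{CoMonoTriangle}, isolated existential vertices for \textsc{AltReach}, while $s,t$ (and $U$, for \textsc{AltReach}) are carried over. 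Inert elements affect neither $s$--$t$ reachability, nor $s$--$t$ alternating reachability, nor the set of monochromatic triangles, so $\rho$ reduces $S$ to itself, and the defining formulas have the form ``numeric condition $\wedge$ one $\sigma$-literal'' (or are purely numeric for the constants), hence are genuine projections. The only case needing a different idea is \textsc{0m-HP}, since Hamiltonicity must cover the whole universe; there I would let the $d$-ary fop replace each vertex $a$ of $G$ by the block $\{a\}\times[\|\A\|^{d-1}]$, threaded by a one-directional internal path $(a,0)\to(a,1)\to\cdots\to(a,\|\A\|^{d-1}{-}1)$, and add a crossing edge $(a,\|\A\|^{d-1}{-}1)\to(a',0)$ whenever $E^G(a,a')$. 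A Hamiltonian path of $G$ from $0$ to $\max$ lifts block by block to one of $\rho(G)$ from $0$ to $\max$; conversely, in any Hamiltonian path of $\rho(G)$ each vertex of a block forces its unique successor, so every block occurs as one consecutive increasing run, and the order in which the blocks appear is a Hamiltonian path of $G$ from $0$ to $\max$. Again the defining formula has the form numeric $\vee$ (numeric $\wedge$ $E$-literal), a valid projection.

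Combining the two clauses, every $S_n\in\F(S)$ is in $\C$ and $\C$-hard, hence $\C$-complete, and together with the uniformity witness this finishes the proof. I expect the main obstacle to be the \textsc{0m-HP} construction: one must check both that the block map is literally a first-order projection and the ``only-if'' direction, namely that $\rho(G)$ has no Hamiltonian path other than the block-by-block ones; everything else is bookkeeping on top of Lemmas~\ref{le:uniformity-examples} and~\ref{le:prop1-uniformity} and the autoreducibility remark.
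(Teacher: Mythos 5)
Your proposal is correct and follows essentially the same route as the paper: uniformity of each $\F(S)$ from Lemma~\ref{le:uniformity-examples} together with the superset monotonicity of Lemma~\ref{le:prop1-uniformity}, membership of each $S_n$ in the class, and hardness via $n$-autoreducing padding fops of arity $d$ with $2^d>n$; the differences are only cosmetic choices of gadget (the paper threads $\|\A\|^{d-1}$ whole copies of the graph for \textsc{0m-HP} and takes disjoint copies for \textsc{coMonoTriangle}, where you blow each vertex up into a path-block and add isolated vertices, respectively, and it derives membership from the \FO-definable size bound $\zeta_n$ rather than closure under finite modifications). The one point you should not leave as a bare citation is the \NP-completeness of \textsc{0m-HP} via fops: the Appendix only defines the problem, and the paper proves this explicitly (membership in \SOE, hardness by a fop from the known-complete \textsc{01-HP} that swaps $1$ with $\max$), so your write-up needs that paragraph too.
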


Then, as a consequence of Theorems \ref{theo:main} and \ref{theo:supinnp}:
\begin{cor}\label{co:superfluity}
\FOA is superfluous with respect to \NLSPACE, \PTIME, \NP, and \coNP
\end{cor}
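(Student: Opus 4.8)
The plan is to read off Corollary~\ref{co:superfluity} as a direct instance of the Main Theorem (Theorem~\ref{theo:main}), combined with Theorem~\ref{theo:supinnp}. First I would, for each of the four classes, fix a logic $\L$ that captures it and contains $\FO$, so that the hypothesis ``$\C$ is captured by $\L$ with $\FO\subseteq\L$'' of Theorem~\ref{theo:main} is met. For $\NP$ take $\L=\SOE$ \cite{fagin:spectra} and for $\coNP$ take $\L=\SOA$ (that is, $\Pi^1_1$); in both of these an $\FO$ sentence is literally a sentence of $\L$, with empty second-order prefix, so the containment $\FO\subseteq\L$ is immediate. For $\PTIME$ take $\L=\SOE\text{-Horn}$ and for $\NLSPACE$ take $\L=\SOE\text{-Krom}$, which capture these classes over ordered structures \cite{Gradell,immerman:book}; ordered structures are exactly the present setting, since every vocabulary considered here carries the numeric symbols, in particular $\leq$. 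In these last two cases $\FO\subseteq\L$ follows from the fact that every $\FO$-definable problem already lies in the class (using $\AC\subseteq\LSPACE\subseteq\NLSPACE\subseteq\PTIME$), so each $\FO$ sentence is equivalent over ordered structures to one of $\L$.

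Second I would invoke Theorem~\ref{theo:supinnp}, which supplies the remaining hypothesis of Theorem~\ref{theo:main}: $\F(\textsc{Reach})$, $\F(\textsc{AltReach})$, $\F(\textsc{0m-HP})$ and $\F(\textsc{CoMonoTriangle})$ are complete and uniform families for $\NLSPACE$, $\PTIME$, $\NP$ and $\coNP$ respectively. Applying Theorem~\ref{theo:main} once to each pair $(\C,\L)$ then yields that $\FOA$ is superfluous with respect to each of $\NLSPACE$, $\PTIME$, $\NP$ and $\coNP$, which is precisely the statement of the corollary.

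I do not expect any genuine obstacle at the level of the corollary itself; its proof is exactly the bookkeeping above. The mathematical weight lies in the two quoted results. In Theorem~\ref{theo:main} it is the chain Lemma~\ref{le:corelemma1} $\to$ Lemma~\ref{le:corelemma2} $\to$ Corollary~\ref{cor:corelemma2}, which reduces the superfluity of an arbitrary $\FOA$ sentence to a padding argument on implicants over an $(n,k)$-uniform set. For Theorem~\ref{theo:supinnp} one must, for each of the four base problems $S$: (i)~establish that every member $S_n$ of $\F(S)$ is $\C$-complete, for which membership is clear (since $S_n$ differs from $S$ only on structures of size below the fixed bound $n$, a trivially decidable set) and hardness follows, via the remark preceding the theorem, once $S$ is $\C$-hard via fops and $n$-autoreducible --- $\C$-hardness via fops being classical for $\textsc{Reach}$ ($\NLSPACE$) and $\textsc{AltReach}$ ($\PTIME$) \cite{immerman:book}, holding for $\textsc{0m-HP}$ ($\NP$) by pinning the endpoints of a fop-complete Hamiltonicity problem through a constant-size gadget \cite{medina:thesis}, and for $\textsc{CoMonoTriangle}$ ($\coNP$) because a fop reducing an $\NP$ problem to $\textsc{MonochromaticTriangle}$ \cite{Garey&Johnson} simultaneously reduces its complement to $\textsc{CoMonoTriangle}$; (ii)~combine the bounds of Lemma~\ref{le:uniformity-examples} with the monotonicity of uniformity (Lemma~\ref{le:monotonicity}), so that each $S_{n_k}\in\F(S)$ inherits $(n_k,k)$-uniformity from $S$ while still containing all structures of size below $n_k$; and (iii)~show that each base problem is $n$-autoreducible for every $n$ (e.g.\ by a fop of arity $d$ with $2^d>n$ that embeds the input on an initial block of coordinates and leaves the remaining coordinates inert). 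The delicate point is step~(iii) for $\textsc{0m-HP}$, where the inert part cannot simply be left isolated: one must thread every padding vertex into the Hamiltonian path while wiring the forced endpoints $0$ and $\max$ of the enlarged structure so that no Hamiltonian path appears unless the original graph had one. If fuller detail were wanted here, that is the step on which I would expect to spend the most effort.
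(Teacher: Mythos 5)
Your proposal is correct and takes essentially the same route as the paper, which states the corollary as an immediate consequence of Theorems~\ref{theo:main} and \ref{theo:supinnp} with exactly the bookkeeping you describe (choosing \SOE, \SOA, \SOE-Horn and \SOE-Krom as the capturing logics containing \FO). Your additional commentary on where the mathematical weight lies accurately reflects how the paper distributes the work between Lemmas~\ref{le:corelemma1}--\ref{le:corelemma2} and the completeness/autoreducibility arguments in the proof of Theorem~\ref{theo:supinnp}.
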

This corollary answers Conjecture \ref{conj:medina-full} for \FOA instead of $\FO$.

\begin{proof}[Proof of Theorem~\ref{theo:supinnp}]
We begin showing that the families $\F(\textsc{Reach})$,
$\F(\textsc{AltReach})$, $\F(\textsc{0m-HP})$ and
$\F(\textsc{CoMonoTriangle})$ are uniform.

By Lemma~\ref{le:uniformity-examples}, the problems in
$\F(\textsc{Reach})$ and $\F(\textsc{AltReach})$ are $(2k+1,k)$
uniform, the problems in $\F(\text{0m-HP})$ are $(4k,k)$-uniform, and
the problems in $\F(\textsc{CoMonoTriangle})$ are $(2k+6,k)$-uniform
for every integer $k\geq0$.  Thus, it is easy to these that these
families are uniform: for $\F(\textsc{Reach})$ and
$\F(\textsc{AltReach})$ the sequence is $\set{2k+1}_{k\geq0}$, for
$\F(\textsc{0m-HP})$ the sequence is $\set{4k}_{k\geq0}$, and for
$\F(\textsc{coMonoTriangle})$ the sequence is
$\set{2k+6}_{k\geq0}$. In all cases, $m=1$.

It remains to show that the families are complete; i.e., that every
problem in the families $\F(\textsc{Reach})$, $\F(\textsc{AltReach})$,
$\F\text(\textsc{0m-HP})$ and $\F(\textsc{CoMonoTriangle})$ is complete
for the classes \NLSPACE, \PTIME, \NP and \coNP respectively. 
That is, that every problem belongs and is hard for the respective
complexity class.

Let us first show that each problem in the above families belongs
to the respective complexity class.
Consider the family $\F(S)$ where $S$ is a problem in a complexity
class $\C$ captured by the logic $\L$ with $\FO\subseteq\L$.
There is a sentence $\Phi\in\L$ such that $S=\mod(\Phi)$.
On the other hand, given integer $n\geq0$, we have a \FO-sentence
$\zeta_n$ such that $\A\models\zeta_n \iff \|\A\|<n$.
Thus, $\Phi\lor\zeta_n$ defines $S_n$ and, since $\FO\subseteq\L$,
$\Phi\lor\zeta_n\in\L$ and $S_n$ belongs to $\C$.
Therefore, every problem in $\F(S)$ belongs to $\C$.
Since the logics that capture the classes \NLSPACE, \PTIME, \NP
and \coNP all include FO, then each problem in each family
belongs to the respective complexity class.

The hardness for each problem in the families $\F(S)$, when $S$ is
{\sc Reach}, {\sc AltReach}, {\sc 0m-HP} or {\sc coMonoTriangle},
follows from the facts that $S$ is hard for its complexity class
and that $S$ is $n$-autoreducible for each integer $n\geq0$.
We begin by showing the latter fact.  That is, given integer $n\geq0$,
we need to construct a fop-reduction $\rho$ such that the image
$\rho(\A)$ has a universe of size greater than $n$.
The reduction in all cases essentially consists 
of padding the original structure to obtain a new one
with the desired size.
Given an integer $n\geq 0$ we let $k$ be the least natural number
such that $2^k>n$. The integer $k$ will be the \emph{arity} of the
fop $\rho$ in each case.

{\sc AltReach} is defined
over the vocabulary $\sigma=\tup{E,U,s,t}$ where $E$ and $U$ are
a binary and monadic relations denoting the edges and universal
vertices in the graph, and $s$ and $t$ are constant symbols
denoting designated vertices.
In this case the reduction simply adds as much disconnected vertices
as necessary.

If $\A=\tup{|\A|,E^\A,U^\A,s^\A,t^\A}$ is a finite $\sigma$-structure.
The image of $\A$ is the finite structure $\rho(\A)$ defined as follows:
\begin{alignat*}{1}
|\rho(\A)| &= |\A|^k\\
E^{\rho(\A)}
  &= \setdef{(\bar u,\bar v)}{\bigwedge_{1\leq j<k}(u_j=v_j=0) \land (u_k,v_k) \in E^\A} \\
U^{\rho(\A)}
  &= \setdef{\bar v}{\bar v\!\upharpoonright_{k-1}=\bar 0\land v_k\in U^\A} \\
s^{\rho(\A)}
  &= \tup{0,\ldots, 0,v} \quad\text{with}\quad v=s^\A \\
t^{\rho(\A)}
  &= \tup{0,\ldots, 0,v} \quad\text{with}\quad v=t^\A
\end{alignat*}
it should be clear that $\rho$ is a reduction
since $\A$ contains an alternating path from $s$ to $t$ iff $\rho(\A)$
contains one.
The projection reducing {\sc Reach} to {\sc Reach$_n$} is almost the same but without
any reference to universal vertices.

In the case of $\F(\textsc{0m-HP})$ 
given a strucure $\A$ its image $\rho(\A)$ 
consists of $k$ copies of $\A$ with edges joining the vertex
corresponding to $\max$ in the $j$-th 
with the vertex corresponding to $0$ in the $j+1$-th copy with 
$j<n$. 
There are no other edges connecting different copies of $\A$.
It is clear that $\rho(\A)$ will have a Hamiltonian path joining
the vertices $\tup{0,\ldots, 0}$ and $\tup{\max,\ldots, \max}$
iff there is a Hamiltonian path joining $0$ and $\max$ in $\A$ 

For {\sc coMonoTriangle} it is enough to have the $\|\A\|^{k-1}$
copies of the input structure without connecting them. Since they
are all different connected components of $\rho(\A)$, a 2-coloring
of the edges in $\rho(\A)$ is just a combination of 2-colorings of
the edges in $\A$ and $\rho(\A)$ is a positive instance of
{\sc CoMonoTriangle} if and only if $\A$ is.

We finish the proof of the theorem by showing that each of the problems
is complete for its complexity class.  We already know that {\sc Reach}
is \NLSPACE-complete, {\sc AltReach} is \PTIME-complete \cite[Corollary 11.3]{immerman:book}
and {\sc coMonoTriangle} is \coNP-complete since {\sc MonoTriangle} is
\NP-complete \cite{medina:thesis,Garey&Johnson}.
It remains to show that {\sc 0m-HP} is \NP-complete.
For the inclusion, notice that {\sc 0m-HP} can be defined in \SOE 
by stating a total ordering of the vertices such that any two 
consecutive vertices in the ordering form an edge in the graph 
and with the first and last element in the ordering being $0$
and $\max$ respectively.
For the hardness, it is enough to reduce the similar {\sc 01-HP}
problem (known to be \NP-complete \cite{borges-bonet12}) to
{\sc 0m-HP} with a projection that interchanges $1$ with $\max$.
\end{proof}

\section{Applications}\label{sec:applications}

We show two applications of the superfluity of \FOA.
In the first application, we establish the \NP-completeness of the
problem {\sc LongestPath} for determining the existence of a path
between two designated vertices $s$ and $t$ of length bigger than a
given threshold $K$.
In the second application, we show that establishing the completeness
of a property for undirected graphs is enough for establishing
the completeness of the same property for directed graphs.
Up to our knowledge, these are novel results in the area.

\begin{thm}\label{theo:applications1}
{\sc LongestPath} is \NP-complete via fops.
\end{thm}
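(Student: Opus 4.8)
The plan is to realize {\sc LongestPath} as $\mod(\Phi)$ for an \SOE sentence $\Phi$ and then apply Corollary~\ref{co:superfluity}: I will exhibit a quantifier-free (hence \FOA) sentence $\theta$ for which the restricted problem $\mod(\Phi\land\theta)$ is \NP-complete via fops, and conclude by superfluity of \FOA for \NP that $\mod(\Phi)=${\sc LongestPath} is \NP-complete via fops. The crux is that $\mod(\Phi\land\theta)$ turns out to be, up to the names of three constant symbols, exactly the problem {\sc 0m-HP}, which is already known to be \NP-complete via fops by Theorem~\ref{theo:supinnp}.

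In detail, {\sc LongestPath} is the problem over $\sigma=\tup{E,s,t,K}$, with $E$ binary and $s,t,K$ constants, consisting of the structures that contain a simple $E$-path from $s$ to $t$ using more than $K$ edges. Such a path can be guessed and checked in polynomial time, so {\sc LongestPath} is in \NP, and hence by Fagin's theorem \cite{fagin:spectra} there is $\Phi\in\SOE$ with $\mod(\Phi)=${\sc LongestPath}. Set $\theta:=(s=0)\land(t=\max)\land\suc(K,\max)$, a quantifier-free sentence (so trivially an \FOA sentence). In a structure $\A$ of size $n>1$ one has $\A\models\theta$ iff $s^\A=0$, $t^\A=n-1$ and $K^\A=n-2$; and because a simple path in an $n$-vertex graph uses at most $n-1$ edges, ``a simple path from $0$ to $n-1$ with more than $n-2$ edges'' is the same as ``a Hamiltonian path from $0$ to $\max$''. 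Thus $\mod(\Phi\land\theta)$ is {\sc 0m-HP} together with the fixed interpretations $s=0$, $t=\max$, $K=\max-1$ of the new constants.

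Finally, let $\rho:\struc(\tup{E})\to\struc(\sigma)$ be the arity-$1$ fop given by $\varphi_E(x_1,x_2):=E(x_1,x_2)$, $\psi_s(x):=(x=0)$, $\psi_t(x):=(x=\max)$ and $\psi_K(x):=\suc(x,\max)$. Each defining formula is projective (either a single $\sigma$-literal or a numeric formula) and each constant formula singles out a unique element (the predecessor of $\max$ exists since $n>1$), so $\rho$ is a legitimate fop. For every $\tup{E}$-structure $\A$ we have $\A\in${\sc 0m-HP} iff $\rho(\A)\models\Phi\land\theta$; composing $\rho$ with the fop-completeness of {\sc 0m-HP} (Theorem~\ref{theo:supinnp}) shows $\mod(\Phi\land\theta)$ is \NP-hard via fops, while $\mod(\Phi\land\theta)\in\NP$ because $\Phi\land\theta\in\SOE$, so $\mod(\Phi\land\theta)$ is \NP-complete via fops. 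As $\theta\in\FOA$ and \FOA is superfluous with respect to \NP (Corollary~\ref{co:superfluity}), it follows that $\mod(\Phi)=${\sc LongestPath} is \NP-complete via fops. The only delicate points are routine bookkeeping: confirming that $\rho$ meets the syntactic definition of a fop --- in particular that $\psi_K$ defines a unique element in every admissible structure, which is exactly where $n>1$ is used --- and that setting the threshold to $\max-1$ genuinely collapses {\sc LongestPath} to Hamiltonicity. I do not expect any combinatorial obstacle; the entire weight of the argument is carried by the superfluity theorem, which is precisely what this section is meant to illustrate.
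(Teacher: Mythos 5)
Your overall strategy is exactly the paper's: realize \textsc{LongestPath} as $\mod(\Phi)$ for some $\Phi\in\SOE$, conjoin a universal (here even quantifier-free, hence trivially \FOA) sentence that collapses the problem to a Hamiltonian-path problem known to be \NP-complete via fops, and then discharge the conjunct by Corollary~\ref{co:superfluity}. Your extra step of exhibiting an explicit fop from \textsc{0m-HP} into $\mod(\Phi\land\theta)$ is a sound (and slightly more self-contained) substitute for the paper's citation that \textsc{HamiltonianPathBetweenTwoPoints} is \NP-complete, and your bookkeeping about $\suc(K,\max)$ picking out a unique element when $n>1$ is correct.

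The genuine issue is that you have quietly changed the problem. The theorem refers to \textsc{LongestPath} as defined in the Appendix, over the vocabulary $\sigma=\tup{L^3,E^2,K^1,s,t}$: edges carry positive integer lengths encoded in binary by the ternary relation $L$, and the threshold $K$ is encoded in binary by a \emph{unary relation}, not a constant symbol. Under fop-reductions the encoding is part of the problem, so proving completeness of your unweighted variant over $\tup{E,s,t,K}$ with $K$ a first-order constant (hence bounded by $n-1$) is not literally a proof of the stated theorem, and your sentence $\theta$ cannot even be written over the paper's vocabulary. This is precisely the part of the argument your reformulation sidesteps: the paper's restricting sentence needs the conjunct $\forall xyz\,(L(x,y,z)\rightarrow z=0)$ to force all edge lengths to $1$, and $\forall x\,(K(x)\leftrightarrow\bit(\max,x))$ to pin the binary-encoded threshold to $\max$, after which a path of total length at least $\max=\|\A\|-1$ with unit-length edges must be Hamiltonian. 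Your argument adapts routinely once these two conjuncts replace your $\theta$ (and they are genuinely universal first-order, so superfluity still applies), but as written the proof establishes the completeness of a different, easier-to-encode problem.
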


Before giving the proof of the theorem, let us define some terminology.
Let $S$ be a problem where each instance includes a function $f$
defined from the set of $k$-tuples of elements in the universe
to the non-negative integers.
If $\sigma$ is the vocabulary used to define $S$, we say that a $k+1$-ary
relation symbol $F$ in $\sigma$ is the \emph{binary expansion of $f$}
if, given a finite $\sigma$-structure $\A$ and a $k$-tuple $\bar a\in\|\A\|^k$:
\[ (a_1,\ldots, a_k,i)\in F^\A \iff \text{the $i$-th bit in the binary expansion of $f(\bar a)$ is 1.} \]

\begin{proof}
{\sc LongestPath} is known to be in NP \cite{Garey&Johnson}, hence there
is a \SOE sentence $\Phi_{\text{\sc LP}}$ for it over the vocabulary
$\sigma=\tup{L^3,E^2,K^1,s,t}$, where $L$ defines the binary expansion of
the lengths of the edges, $E$ is the edge relation and $K$ defines the
binary expansion of the bound on the total length for a path joining
$s$ and $t$.

Consider the \FOA sentence $\psi=\psi_1\land\psi_2$ where 
\begin{alignat*}{1}
\psi_1 	&= \forall xyz\, L(x,y,z) \longrightarrow z = 0 \\
\psi_2	&= \forall x \, K(x) \longleftrightarrow \bit(\max,x) \,.
\end{alignat*}
Then a finite $\sigma$-structures $\A$ is a model of the sentence
$\Phi=\Phi_{\textsc{LP}}\land\psi$ if and only if every edge of
$\A$ has length 1 and there is a path from $s^\A$ to $t^\A$ with
total length at least $\max$. 

Since $\max=\|\A\|-1$ and every edge has length 1, the path joining
$s$ and $t$ must be Hamiltonian.
Therefore $\Phi=\Phi_{\textsc{LP}}\land\psi$ defines 
{\sc HamiltonianPathBetweenTwoPoints} (which is \NP-Complete \cite{borges-bonet12})
and $\Phi_{\textsc{LP}}$ defines an \NP-complete problem because
$\psi$ is superfluous for \NP.
\end{proof}

The second application considers directed and undirected versions of problems
on graphs. If $S\subseteq\struc(\tup{E^2})$ is a problem over undirected graphs,
we say that $\tilde S$ is the \emph{version over directed graphs} of $S$ if 
there is a sentence $\Phi$ such that $\tilde S=\mod(\Phi)$ and $S=\mod(\Phi\land\psi)$
where $\psi\equiv\forall xy\, E(x,y)\longrightarrow E(y,x)$. In such case,
if $S$ is $\C$-complete for a class $\C$ for which \FOA is superfluous, then
$\tilde S$ is also $\C$-complete.

\begin{thm}\label{th:dirvsundir}
If \FOA is superfluous with respect to class $\C$ and $S$ is a $\C$-complete
problem over undirected graphs, then its version over directed graphs is
$\C$-complete as well.
\end{thm}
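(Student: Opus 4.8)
The plan is to apply the definition of superfluity essentially verbatim, so the proof is very short. First I would observe that the symmetry axiom $\psi\equiv\forall xy\, E(x,y)\longrightarrow E(y,x)$ is a \FOA-sentence: it has the form $\forall\bar x\,\theta(\bar x)$ with $\bar x=\tup{x,y}$ and quantifier-free matrix $\theta(\bar x)\equiv\neg E(x,y)\lor E(y,x)$ (which is in fact in $\cnf_2$). Hence, since by hypothesis \FOA is superfluous with respect to $\C$, the particular sentence $\psi$ is superfluous with respect to $\C$; unpacking the definition, this means there is a logic $\L$ with $\FO\subseteq\L$ that captures $\C$ and such that for every sentence $\Phi\in\L$, if $\mod(\Phi\land\psi)$ is $\C$-complete then $\mod(\Phi)$ is $\C$-complete.

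Next I would take the sentence $\Phi$ that witnesses, by definition, that $\tilde S$ is the version over directed graphs of $S$, so that $\tilde S=\mod(\Phi)$ and $S=\mod(\Phi\land\psi)$; here $\Phi$ is taken in the logic $\L$ that captures $\C$ (for instance $\L=\SOE$ when $\C=\NP$). Since $S$ is $\C$-complete by hypothesis, $\mod(\Phi\land\psi)$ is $\C$-complete, and therefore the superfluity of $\psi$ with respect to $\L$ yields that $\mod(\Phi)$ is $\C$-complete. But $\mod(\Phi)=\tilde S$, so $\tilde S$ is $\C$-complete, which is exactly the claim. In passing this also records that $\tilde S\in\C$, since $\Phi\in\L$ and $\L$ captures $\C$.

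The argument requires no new combinatorics: Theorem~\ref{theo:main} together with Theorem~\ref{theo:supinnp} already furnishes the classes for which \FOA is superfluous, and this theorem merely repackages that fact for the specific universal sentence expressing symmetry of the edge relation. The only point that needs care — and the one I would flag as the main (indeed only) obstacle — is ensuring that the sentence $\Phi$ appearing in the definition of the directed version lies in the ambient logic $\L$ for which \FOA is superfluous; this is why the statement is naturally phrased relative to such an $\L$, and it is harmless because in every intended application the defining sentence $\Phi$ is exactly of that form.
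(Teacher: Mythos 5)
Your proposal is correct and follows exactly the argument the paper intends: the symmetry axiom $\psi$ is a \FOA-sentence, the definition of the directed version supplies $\Phi$ with $S=\mod(\Phi\land\psi)$ and $\tilde S=\mod(\Phi)$, and superfluity of $\psi$ with respect to the capturing logic $\L$ transfers $\C$-completeness from $S$ to $\tilde S$ (the paper gives no separate proof beyond the paragraph preceding the theorem, which is precisely this reasoning). Your added remark that $\Phi$ must lie in the logic $\L$ for which \FOA is superfluous is a legitimate point of care that the paper glosses over.
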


\section{Discussion}
\label{sec:conclusions}

We gave a partial affirmative answer to Conjecture~\ref{conj:medina-full} by
proving that the universal fragment \FOA is superfluous with respect to \NP.
This is a consequence of Theorem~\ref{theo:main} which also implies that
\FOA is superfluous with respect to \NLSPACE, \PTIME and \coNP. Our method
is fairly general and it is based on the $(n,r)$-uniformity concept which
is new, as far as we know. On the other hand, we extended the previous
concept of superfluity \cite{medina:thesis} to complexity classes
(and languages) beyond \NP.

In one application of these results, the superfluity of \FOA allows to give
syntactic proofs for \NP-completeness if one can find a suitable restriction
of the problem that is expressible with a \FOA sentence and already known
to be complete (cf.\ Theorem~\ref{theo:applications1}).
In another application we show that the directed versions of any \NP-complete
problem on undirected remain \NP-complete.

In the future, we want to continue this research with the aim of 
proving Medina's conjecture in its full generality, which we 
believe to be true.
On the other hand, one can think in generalizations and related
versions of this conjecture. For example, a more general conjecture
is the following:

\begin{conj}\label{conj:generalizedconjecture}
Let $\L$ and $\L'$ be two logics capturing classes $\C$ and $\C'$ respectively
and such that $\L'\subseteq \L$, and let $\Phi\in\L$ and $\Phi'\in\L'$ be two
sentences. Then,
\[
  \text{$\mod(\Phi\land\Phi')$ is $\C$-complete} \ \implies \
  \text{either $\mod(\Phi)$ or $\mod(\Phi')$ is $\C$-complete} \,.
\]
\end{conj}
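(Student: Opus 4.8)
The plan is to begin exactly as in the proof of Theorem~\ref{theo:main}, extracting a fop that witnesses the hardness of the conjunction, and then to try to \emph{localise} the source of $\C$-hardness to a single conjoint. Since $\mod(\Phi\land\Phi')$ is $\C$-complete and $\L$ captures $\C$, fix a $\C$-complete problem $T$ together with a fop $\rho$ satisfying $\A\in T\iff\rho(\A)\models\Phi\land\Phi'$ for every structure $\A$. For the negative instances $\A\notin T$ we have $\rho(\A)\nmodels\Phi$ or $\rho(\A)\nmodels\Phi'$, so the complement of $T$ is covered by the two ``failure loci'' $\setdef{\A}{\rho(\A)\nmodels\Phi}$ and $\setdef{\A}{\rho(\A)\nmodels\Phi'}$. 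The whole content of the statement is that the $\C$-hardness of $T$, which is spread over this cover, must in fact be carried by one of the two conjoints.

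First I would split on the relationship between $\C'$ and $\C$. If $\C'\subseteq\C$ is proper and $\C'$ is closed under fops --- as are all the standard classes considered here, since fops are extremely weak reductions --- then $\mod(\Phi')\in\C'$ cannot be $\C$-complete, because a $\C$-complete problem lying in $\C'$ would force $\C\subseteq\C'$ and hence $\C=\C'$. In this regime the disjunction collapses and it suffices to prove that $\mod(\Phi)$ is $\C$-complete; this already contains Conjecture~\ref{conj:medina-full} as the special case $\L'=\FO$, $\C'=\AC$, $\C=\NP$. Moreover, whenever $\Phi'$ happens to lie in \FOA we are done immediately: by Lemma~\ref{le:corelemma2} the universal sentence $\Phi'$ is superfluous with respect to $\rho$, so by Corollary~\ref{cor:corelemma2} the very same $\rho$ already reduces $T$ to $\mod(\Phi)$.

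To move beyond the universal fragment I would try to replay the superfluity argument of Lemma~\ref{le:corelemma1} with $\Phi'$ in place of the universal conjoint, using an $(n,k)$-uniform complete family for $\C$. The hope is that on the part of the domain where a violation of $\Phi$ in the image can be repaired by a uniformity move without leaving the uniform problem, every failure of the conjunction is forced to be a failure of $\Phi'$, so that one can peel off that part and obtain either a fop reducing a complete problem to $\mod(\Phi)$ or a direct reduction showing that $\mod(\Phi')$ absorbs all of $\C$. In the case $\C'=\C$ both disjuncts remain live, which gives more room in the conclusion but also lets $\Phi'$ be as expressive as $\Phi$ (e.g.\ both in \SOE for \NP), so the localisation cannot rely on $\Phi'$ defining an ``easy'' problem.

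The hard part --- and the reason the statement is posed as a conjecture --- is precisely this localisation step, and I do not expect the uniformity technique alone to settle it. Uniformity only lets us correct violations that depend on boundedly many literals of the image structure; this is exactly why it tames \emph{universal} first-order conjoints, since a violated clause involves only finitely many tuples and can be mended by a single uniformity move. A general $\Phi'\in\L'$ can, however, express a global property of $\rho(\A)$ whose failure is witnessed by no bounded pattern of literals, so the moves that neutralise a universal sentence are unavailable and a single failure of $\Phi'$ cannot be relocated. Finding a substitute for boundedness that still forces the hardness onto one conjoint appears to require a genuinely new idea, which is consistent with the authors leaving the statement unresolved.
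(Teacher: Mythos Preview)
The statement you are addressing is Conjecture~\ref{conj:generalizedconjecture}, which the paper explicitly leaves open: it is presented in the Discussion section as a generalization of Medina's conjecture, with no proof or proof sketch offered. There is therefore no ``paper's own proof'' to compare your proposal against.

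Your proposal is not a proof either, and you acknowledge this honestly in the final paragraph. What you have written is a reasonable discussion of why the paper's uniformity machinery does not obviously extend to the general statement. Your observations are essentially correct: the case $\Phi'\in\FOA$ is handled by Lemma~\ref{le:corelemma2} and Corollary~\ref{cor:corelemma2}; the collapse of the disjunction when $\C'\subsetneq\C$ and $\C'$ is closed under fops is a valid remark (and matches the paper's comment that the conjecture reduces to Medina's in the \FO/\AC regime); and your diagnosis that uniformity only repairs violations witnessed by boundedly many literals, hence cannot touch a general $\Phi'\in\L'$, is exactly the obstruction. One small caution: in the second paragraph you assert ``$\C'$ is closed under fops --- as are all the standard classes considered here,'' but closure under fops is closure \emph{downward} under reductions, which is not quite automatic and deserves a word of justification for each class you invoke.

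In short, there is no gap to name because there is no claimed proof; you have correctly identified that the statement is open and that the techniques of the paper settle only the universal-first-order slice.
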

This conjecture reduces to Medina's when $\L'$ is \FOA
and \AC (languages recognized by circuits of polynomial size, constant
depth and unbounded fan-in) is known to be strictly included in $\C$.

Another direction is to use the concept of complete and uniform families
to separate complexity classes. That is, if $\C$ and $\C'$ are two
complexity classes such that $\C\subseteq\C'$ and $\C'$ contains
a complete and uniform family $\F$ but $\C$ does not contain such
a family, then it must be the case that the two classes are different.

The converse of Medina's conjecture is also interesting. In general, we
know that the completeness of $\mod(\Phi)$ does not necessarily imply the completeness
of $\mod(\Phi\land\psi)$. Indeed, it is enough for $\psi$ to be inconsistent
to see this or, for example, consider the case of \SAT and \TwoSAT in which
the first is \NP-complete while the second is in \PTIME and not believed to
be \NP-complete, yet \TwoSAT can be expressed as $\mod(\Phi_\SAT\land\psi)$
for a suitable choice of $\psi$ where $\Phi_\SAT$ defines \SAT.
However, an interesting question is what syntactic characteristics must have
$\psi\in\FO$ in order for the $\C$-completeness of $\Phi$ to be preserved by
the conjunction $\Phi\land\psi$.

Finally, there is a clear relation between the concept of $(n,r)$-uniformity
and Ramsey-type problems.
Let $S$ be a $(n,r)$-uniform class of graphs defined over the vocabulary  
$\tup{E^2}$ and $m\geq n$.
Then, a sequence $\set{L_j(x_j,y_j)}_{1\leq j\leq r}$ of literals together with
a sequence $\set{(a_j,b_j)}_{1\leq j\leq r}$ of different pairs from $[m]^2$, with
$a_j\neq b_j$ for $1\leq j\leq r$, can be interpreted as a \emph{partial 2-coloring}
on the edges of $K_m$, the complete graph on $m$ vertices, by considering the edge
$(u,v)$ to be colored red or blue whether $E(u,v)$ or $\neg E(u,v)$ belong to the
sequence of literals respectively.
Since $S$ is $(n,r)$-uniform, there there is a coloring $c$ of the edges of $K_m$
such that the subgraph consisting of all the $m$ vertices but only the red edges
belongs to $S$.
This connection may prove useful when giving a definite answer to Medina's
conjecture.

\section*{Acknowledgments}
\noindent The first author would like to thank Prof.\ Argimiro Arratia, his
former thesis advisor. Though Prof.\ Arratia did not collaborate
directly in this paper, this work was motivated to a great extent
by fruitful mathematical discussions with him.
We are in debt to the reviewers and editor whose comments helped
us to improve the paper.

\bibliographystyle{alpha}
\bibliography{paper}

\appendix

\section{Decision Problems}\label{app:problemsdefinitions}


\small
\smallskip\bigskip

\centering
\fbox{
\begin{minipage}{.9\textwidth}
{\sc 3-DimensionalMatching}\\[-1.7em]
\begin{center}
\begin{description}
\item[\quad Instance:] a collection $M$ of triplets over a set $S$.
\item[\quad Property:] the existence of a three dimensional matching $M'$ contained in $M$;
  i.e., a subset $M'\subseteq M$ with $|M'|=|S|$ such that for every $a\in S$,
  there are exactly three triplets $(a,y,z)$, $(x',a,z')$ and $(x'',y',a)$ in $M'$.
\item[\quad Vocabulary:] $\sigma=\tup{M^3}$.
\end{description}
\end{center}
\end{minipage}}

\smallskip\bigskip

\centering
\fbox{
\begin{minipage}{.9\textwidth}
{\sc AltReach}\\[-1.7em]
\begin{center}
\begin{description}
\item[\quad Instance:] an alternating graph $G$ with two highlighted vertices $s$ and $t$.
\item[\quad Property:] the vertex $t$ is accessible from the vertex $s$.
\item[\quad Vocabulary:] $\sigma=\tup{E^2,U^1,s,t}$.
\end{description}
\end{center}
\end{minipage}}

\smallskip\bigskip

\fbox{
\begin{minipage}{.9\textwidth}
{\sc HamiltonianPathBetweenTwoPoints}\\[-1.7em]
\begin{center}
\begin{description}
\item[\quad Instance:] a finite simple graph $G$ with two special vertices $s$ and $t$.
\item[\quad Property:] existence of a Hamiltonian path between $s$ and $t$.
\item[\quad Vocabulary:] $\sigma=\tup{E^2,s,t}$.
\end{description}
\end{center}
\end{minipage}}

\smallskip\bigskip

\fbox{
\begin{minipage}{.9\textwidth}
{\sc HamiltonianPathBetweenZeroAndMax}\\[-1.7em]
\begin{center}
\begin{description}
\item[\quad Instance:] a finite simple graph $G$ with $\set{0,\ldots, n-1}$ as its set of vertices.
\item[\quad Property:] existence of a Hamiltonian path between $0$ and $n-1$.
\item[\quad Vocabulary:] $\sigma=\tup{E^2}$.
\end{description}
\end{center}
\end{minipage}}

\smallskip\bigskip

\fbox{
\begin{minipage}{.9\textwidth}
{\sc LongestPath}\\[-1.7em]
\begin{center}
\begin{description}
\item[\quad Instance:] a finite simple graph $G$ with lengths $\ell(e)\in\mathbb Z^+$
  associated to each edge, two special vertices $s$ and $t$, and an lower bound $K\in\mathbb Z^+$.
\item[\quad Property:] existence of a simple path between $s$ and $t$ with length at least $K$.
\item[\quad Vocabulary:] $\sigma=\tup{L^3,E^2,K^1,s,t}$.
\end{description}
\end{center}
\end{minipage}}

\smallskip\bigskip

\fbox{
\begin{minipage}{.9\textwidth}
{\sc MonochromaticTriangle}\\[-1.7em]
\begin{center}
\begin{description}
\item[\quad Instance:] A graph $G$.
\item[\quad Property:] There is a 2-coloring of the edges of $G$ such that $G$
  contains no triangle with all edges of the same color.
\item[\quad Vocabulary:] $\sigma=\tup{E^2}$.
\end{description}
\end{center}
\end{minipage}}

\smallskip\bigskip

\fbox{
\begin{minipage}{.9\textwidth}
{\sc Reach}\\[-1.7em]
\begin{center}
\begin{description}
\item[\quad Instance:] A graph $G$ with two highlighted vertices $s$ and $t$.
\item[\quad Property:] There is a path between $s$ and $t$.
\item[\quad Vocabulary:] $\sigma=\tup{E^2,s,t}$.
\end{description}
\end{center}
\end{minipage}}

	
	
\vspace{-30 pt}
\end{document}